\documentclass{article}
\usepackage{authblk}
\usepackage{amsmath,amssymb,amsthm,amsfonts,bm,fancybox,fullpage}
\usepackage{multirow}
\usepackage[dvipdfmx]{graphicx}
\usepackage{color}
\usepackage{hyperref}
\usepackage{booktabs}
\usepackage{cite}
\usepackage[
	n,
	operators,
	advantage,
	sets,
	adversary,
	landau,
	probability,
	notions,	
	logic,
	ff,
	mm,
	primitives,
	events,
	complexity,
	asymptotics,
	keys
	]{cryptocode}

\theoremstyle{plain}
\newtheorem{theorem}{Theorem}[section]
\newtheorem{lemma}[theorem]{Lemma}
\newtheorem{corollary}[theorem]{Corollary}
\newtheorem{proposition}[theorem]{Proposition}

\newtheorem{definition}[theorem]{Definition}

\theoremstyle{definition}

\newcommand{\ra}{\rightarrow}
\newcommand{\la}{\leftarrow}

\newlength{\ralen}
\setlength{\ralen}{-0.8ex}
\setlength{\unitlength}{1truept}
\newcommand{\blk}{\raisebox{\ralen}{\framebox(11,13){$\clubsuit$}}}
\newcommand{\red}{\raisebox{\ralen}{\framebox(11,13){$\heartsuit$}}}
\newcommand{\back}{\raisebox{\ralen}{\framebox(11,13){\bf\large ?}}}
\newcommand{\henc}{E^{\heartsuit}}

\newcommand{\card}{\mathsf{card}}
\newcommand{\rand}{\mathsf{rand}}
\newcommand{\comm}{\mathsf{comm}}

\begin{document}

\title{Single-Shuffle Full-Open Card-Based Protocols for Any Function}
\author[1*]{Reo Eriguchi}
\author[2,3,1]{Kazumasa Shinagawa}
\affil[1]{National Institute of Advanced Industrial Science and Technology, Japan}
\affil[2]{University of Tsukuba, Japan}
\affil[3]{Kyushu University, Japan}
\affil[*]{Corresponding author: eriguchi-reo@aist.go.jp}
\date{}
\maketitle

\begin{abstract}
A card-based secure computation protocol is a method for $n$ parties to compute a function $f$ on their private inputs $(x_1,\ldots,x_n)$ using physical playing cards, in such a way that the suits of revealed cards leak no information beyond the value of $f(x_1,\ldots,x_n)$.
A \textit{single-shuffle full-open} protocol is a minimal model of card-based secure computation in which, after the parties place face-down cards representing their inputs, a single shuffle operation is performed and then all cards are opened to derive the output.
Despite the simplicity of this model, the class of functions known to admit single-shuffle full-open protocols has been limited to a few small examples.
In this work, we prove for the first time that every function admits a single-shuffle full-open protocol.
We present two constructions that offer a trade-off between the number of cards and the complexity of the shuffle operation.
These feasibility results are derived from a novel connection between single-shuffle full-open protocols and a cryptographic primitive known as \textit{Private Simultaneous Messages} protocols, which has rarely been studied in the context of card-based cryptography.
We also present variants of single-shuffle protocols in which only a subset of cards are revealed.
These protocols reduce the complexity of the shuffle operation compared to existing protocols in the same setting. 
\end{abstract}


\section{Introduction}

To date, various cryptographic techniques have been proposed to achieve useful functionalities and security requirements.
However, since these techniques are usually designed for implementation on computers, the algorithms tend to be complicated, leaving a gap in understanding for non-experts.
\textit{Card-based cryptography} \cite{BoerEC1989,KilianC1994,MizukiIJIS2014} studies an implementation of cryptographic primitives using physical playing cards.
This physical implementation helps make the functionality and security of cryptographic algorithms more accessible and easier to understand.
In this work, we focus on binary cards with suits $\heartsuit$ or $\clubsuit$, which is the most standard setting in the literature.

\textit{Secure computation} \cite{Yao82} has been studied actively in card-based cryptography.
This primitive enables $n$ parties each holding a private input $x_i$ to evaluate a function $f$ on $(x_1,\ldots,x_n)$ without revealing any extra information other than the value of $f(x_1,\ldots,x_n)$.
Many works have proposed different card-based secure computation protocols for general and specific functions, mainly focusing on minimizing the number of cards and shuffle operations required (e.g., \cite{NishidaTAMC2015,KastnerAC2017,ShinagawaDAM2021,TakahashiAPKC2024}).

In this work, we consider a minimal model for card-based secure computation protocols, referred to as \textit{single-shuffle full-open protocols} \cite{ShinagawaEPRINT2022,SN25}.
These protocols take the following steps:
\begin{enumerate}
    \item Each party places a sequence of face-down cards representing their private input $x_i$.
    \item A random permutation $\pi$ sampled from some probability distribution is applied to the entire sequence.
    \item All cards are then turned face up, and the output is computed based on the sequence of revealed suits.
\end{enumerate}
The security requirement is that the revealed suits leak no information other than the value of $f(x_1,\ldots,x_n)$.
This model assumes the minimum number of shuffle operations and the simplest opening operation (that is, turning over all cards).

Some feasibility results have been shown in this setting.
The first card-based protocol, known as ``Five-Card Trick'' \cite{BoerEC1989}, is a single-shuffle full-open protocol that computes the two-input AND function. 
It is also known that several specific functions such as three-input equality, majority, and XOR functions can be computed by single-shuffle full-open protocols \cite{HeatherFAOC2014,ShinagawaICISC2018,ToyodaINDOCRYPT2021,KuzumaAPKC2022,ShinagawaARXIV2025}.
However, the fundamental question of whether every function $f:\bin^n\ra\bin$ admits a single-shuffle full-open protocol remains open.
Surprisingly, even the three-input AND function is not known to have a single-shuffle full-open protocol.
Note that if a more advanced operation --- namely, opening only a \textit{subset} of cards ---  is allowed, then every function can be computed with a single shuffle \cite{ShinagawaDAM2021,TozawaUCNC2023,OnoICISC2023}.
However, it is not straightforward to extend such protocols to the full-open setting since revealing even a single additional card may compromise privacy.

\subsection{Our Results}

In this work, we prove that every function $f:\bin^n\ra\bin$ admits a single-shuffle full-open protocol.
Our constructions are based on a novel connection between single-shuffle full-open protocols and a cryptographic primitive known as \textit{Private Simultaneous Messages} (PSM) protocols, which has rarely been studied in the context of card-based cryptography.
We also present variants of single-shuffle protocols in which only a subset of cards are opened.
These protocols reduce the complexity of the shuffle operation compared to the previous single-shuffle protocols \cite{ShinagawaDAM2021,TozawaUCNC2023,OnoICISC2023}.
We show a comparison of single-shuffle protocols in Table~\ref{table:1}.
Below, we elaborate on our results.

\begin{table}
    \caption{Comparison of single-shuffle protocols}
\begin{center}
    \begin{tabular}{ccccc}
    Protocol&Function&Opening&Shuffle&\# cards\\
    \toprule
    \cite{BoerEC1989}&two-input AND&full-open&$\text{RC}\times 1$&$5$ \\
    \cite{ShinagawaICISC2018}&two-input XOR &full-open&$\text{RC}\times 1$&$4$ \\
    \cite{ShinagawaARXIV2025}&three-input XOR &full-open&$\text{RC}\times 1$&$8$ \\
    \cite{HeatherFAOC2014,ShinagawaICISC2018}&three-input equality &full-open&$\text{RC}\times 1$&$6$ \\
    \cite{ToyodaINDOCRYPT2021}&three-input majority &full-open&$\text{RC}\times 1$ \& $\text{RBC}\times1$&$6$ \\
    \cite{KuzumaAPKC2022}&$n$-input XOR&full-open&$\text{RBC}\times (n-1)$&$2n$ \\
    \cite{KuzumaAPKC2022}&$n$-input XOR&full-open&$\text{PSc}\times 2$&$O(n\log n)$ \\
    Ours (Corollary~\ref{corollary:SSFO-all})&any&full-open&$\text{PSh}\times1$ \& $\text{CS}\times1$&$2^{O(n^32^{n/2})}$ \\
    Ours (Corollary~\ref{corollary:SSFO-all-2})&any&full-open&$\text{PSh}\times O(n2^n)$ \& $\text{PSc}\times1$&$O(n^22^n)$ \\
    Ours (Corollary~\ref{corollary:SSFO-Sym})&symmetric&full-open&$\text{PSh}\times n$&$O(n^32^{2n})$ \\
    Ours (Corollary~\ref{corollary:SSFO-AND})&$n$-input AND&full-open&$\text{PSh}\times n$&$O(n^2)$ \\
    \midrule
    \cite{MizukiAC2012}&two-input AND&adaptive-open&$\text{RC}\times 1$ \& $\text{RBC}\times1$&$4$ \\
    \cite{KuzumaAPKC2022}&$n$-input AND&adaptive-open&$\text{RBC}\times (n-1)$&$4n-2$ \\
    \cite{KuzumaAPKC2022}&$n$-input AND&adaptive-open&$\text{PSc}\times 2$& $O(n\log n)$ \\
    \cite{ShinagawaDAM2021,TozawaUCNC2023,OnoICISC2023}&any&adaptive-open&$\text{PSc}\times O(C_f+n)$&$O(C_f+n)$\\
    Ours (Corollary~\ref{corollary:SSFO-all})&any&static-open&$\text{PSh}\times 1$&$2^{O(n^32^{n/2})}$ \\   
    Ours (Corollary~\ref{corollary:SSFO-all})&any&adaptive-open&$\text{RC}\times 1$&$2^{O(n^32^{n/2})}$\\
    \bottomrule
    \end{tabular}
\end{center}
\fontsize{8pt}{8pt}\selectfont
$n$ refers to the number of inputs.
``CS'' refers to a complete shuffle, 
``PSc'' refers to a pile-scramble shuffle, 
``RC'' refers to a random cut, 
``PSh'' refers to a pile-shifting shuffle, and
``RBC'' refers to a random bisection cut, which is a pile-shifting shuffle performed on two piles of cards.
$C_f$ refers to the minimum number of gates in any circuit computing $f$.
\fontsize{10pt}{10pt}\selectfont
    \label{table:1}
\end{table}

\smallskip
\noindent\textbf{Single-Shuffle Full-Open Protocols for Any Function.}
We present two constructions of single-shuffle full-open protocols that offer a trade-off between the number of cards and the complexity of the shuffle operation.
To describe our constructions, we introduce basic shuffle operations.
\begin{itemize}
    \item A \textit{complete shuffle} applies a uniformly random permutation to a sequence of cards.
    \item A \textit{pile-scramble shuffle} divides a sequence of cards into multiple piles and applies a complete shuffle to these piles.
    \item A \textit{random cut} applies a uniformly random shift to a sequence of cards.
    \item A \textit{pile-shifting shuffle} divides a sequence of cards into piles and applies a random cut to the piles.
\end{itemize}
All our protocols use a composition of the above four shuffle operations\footnote{We
treat any composition of shuffles as a single shuffle.}.
We consider that a complete shuffle (resp.\ a random cut) is preferable to a pile-scramble (resp.\ a pile-shifting) shuffle since the latter requires additional items such as envelopes or sleeves to make piles.

For any function $f:\bin^n\ra\bin$, our first construction requires $2^{O(n^32^{n/2})}$ cards and uses a composition of one pile-shifting shuffle and one complete shuffle.
Our second construction reduces the number of cards to $O(n^22^n)$, but at the cost of a more complex shuffle: a composition of $O(n2^n)$ pile-shifting shuffles and one pile-scramble shuffle.

\smallskip
\noindent\textbf{Techniques: Implication from Private Simultaneous Messages.}
Our constructions are based on a novel connection between single-shuffle full-open protocols and PSM protocols.
This primitive was originally proposed by Feige et al. \cite{FKN94} as a non-interactive model of secure computation, in a setting unrelated to card-based cryptography.
In this model, each party computes a message $m_i^{x_i}(\rho)$ based on their private input $x_i$ and shared randomness $\rho\in\bin^r$, and then an external party can compute $f(x_1,\ldots,x_n)$ from the collection of messages $(m_i^{x_i}(\rho))_{i=1,\ldots,n}$.
The security requirement is that the messages reveal no information beyond the output of $f$ without the knowledge of the shared randomness $\rho$.

We present two transformations from PSM protocols to single-shuffle full-open protocols, corresponding to the above two constructions.
In retrospect, both transformations are quite simple.
In the first construction, we enumerate all possible collections of messages $M(\rho)=(m_i^{x_i}(\rho))_{i=1,\ldots,n}$ for every possible value of randomness $\rho\in\bin^r$.
Each such $M(\rho)$ is encoded into a pile of cards, resulting in $2^r$ piles in total.
A pile-shifting shuffle is applied to select a pile $M(\rho)$ corresponding to a uniformly random $\rho$, and then a complete shuffle is applied to the remaining piles to ``erase'' the other messages $M(\rho')$ for $\rho'\neq \rho$.
Finally, all cards are revealed and $f(x_1,\ldots,x_n)$ is obtained by applying the decoding algorithm of the PSM protocol to the selected pile $M(\rho)$.
Since every function admits a PSM protocol, this transformation leads to a single-shuffle full-open protocol for any function $f$.
However, the number of piles is exponential in the randomness complexity $r$ of the PSM protocol.
Since the currently best known PSM protocol has $r=O(n^32^{n/2})$ \cite{BKN18}, a doubly exponential number of cards $2^{O(n^32^{n/2})}$ are required.

We propose an alternative transformation applicable to a subclass of PSM protocols that satisfy a certain algebraic property, which we refer to as \textit{additive PSM}.
This transformation reduces the number of cards compared to our first transformation, which applies to arbitrary PSM protocols.
Since several functions admit efficient additive PSM protocols \cite{Ish13,SESN23}, it derives single-shuffle full-open protocols with fewer cards for those functions including the AND function and any symmetric function\footnote{We say that a function is symmetric if its output is independent of the order of inputs.}.
In particular, we obtain a single-shuffle full-open protocol for the $n$-input AND function using $O(n^2)$ cards.
We can extend this protocol to general functions, by leveraging the fact that any function can be computed by evaluating at most $2^n$ AND functions.
This yields our second construction, which uses $O(n^22^n)$ cards.
See Section~\ref{sec:overview} for more details.

Notably, a converse implication was shown in \cite{SN25}: Any single-shuffle full-open protocol for a function $f$ can be transformed into a PSM protocol for $f$.
Combining the result of \cite{SN25} with ours, we can relate the minimum number of cards required by any single-shuffle full-open protocol for $f$, to the minimum communication complexity and the minimum randomness complexity of any PSM protocol for $f$.
See Section~\ref{section:relation} for details.

\smallskip
\noindent\textbf{Variants: Single-Shuffle Protocols with Partial Opening.}
We show variants of our first construction that reduce the complexity of the shuffle operation by relaxing the requirement that all cards are revealed.
When a subset of cards are opened, we distinguish two types of opening: (1)~\textit{static opening}, where a fixed subset of cards are revealed; (2)~\textit{adaptive opening}, where the next card to reveal is determined by the suits of the cards that have already been revealed.
We consider that static opening is preferable to the adaptive one in terms of simplicity.

Our full-open protocol requires a composition of a pile-shifting shuffle and a complete shuffle.
The complete shuffle is required to ``erase'' information encoded in a specific subset of cards.
The first variant reveals all cards except those in that subset, resulting in a single-shuffle static-open protocol for any function using only a pile-shifting shuffle.

Note that the pile-shifting shuffle is used to reveal one pile selected uniformly at random from multiple piles.
We propose an adaptive-open protocol to realize such a functionality using a random cut and some additional cards, which may be of independent interest.
As a result, we obtain another variant: a single-shuffle adaptive-open protocol for any function using a random cut.
Both variants require asymptotically the same number of cards as our first construction, i.e., $2^{O(n^32^{n/2})}$.

We compare our protocols with the previous single-shuffle protocols in \cite{ShinagawaDAM2021,TozawaUCNC2023,OnoICISC2023}, which require adaptive opening.
Our protocols reduce the complexity of the shuffle operation: the protocols in \cite{ShinagawaDAM2021,TozawaUCNC2023,OnoICISC2023} require a composition of $O(C_f+n)$ pile-scramble shuffles, where $C_f$ is the minimum number of gates in any circuit computing $f: \bin^n \rightarrow \bin$.
As a trade-off, the number of cards required by our protocols is larger than those of the previous protocols \cite{ShinagawaDAM2021,TozawaUCNC2023,OnoICISC2023} with $O(C_f+n)$ cards, which are upper bounded by $2^{O(n)}$ in the worst case.

\subsection{Related Works}

We provide a brief summary of existing works on card-based secure computation protocols.

\smallskip
\noindent\textbf{Single-Shuffle Protocols.}
Prior to this work, the class of functions known to admit single-shuffle full-open protocols were limited to a few specific examples: two-input AND \cite{BoerEC1989}, multi-input XOR \cite{ShinagawaICISC2018,KuzumaAPKC2022,ShinagawaARXIV2025}, three-input equality \cite{HeatherFAOC2014,ShinagawaICISC2018}, and three-input majority \cite{ToyodaINDOCRYPT2021}.
We are the first to prove that every function can be computed by a single-shuffle full-open protocol.
The protocols in \cite{ShinagawaDAM2021,TozawaUCNC2023,OnoICISC2023} can compute any function with a single shuffle but they require a more advanced operation when opening cards, namely adaptive opening.
Furthermore, the complexity of the shuffle operation is sub-optimal: 
the protocols in \cite{ShinagawaDAM2021,TozawaUCNC2023,OnoICISC2023} require a composition of many pile-scramble shuffles proportional to the number of gates plus the number of input wires in a circuit computing the function.
In contrast, our single-shuffle adaptive-open protocol uses only one random cut.

\smallskip
\noindent\textbf{Protocols Using More Than One Shuffles.}
There are card-based secure computation protocols that apply multiple rounds of shuffle and opening operations.
In this model, Nishida, Mizuki, and Sone~\cite{NishidaTAMC2015} showed that any function $f:\bin^n\ra\bin$ can be computed using $O(2^n)$ rounds of shuffle and opening operations.
The protocol achieves a constant number of cards (i.e., six) in addition to those representing inputs.
Shinagawa and Nuida \cite{ShinagawaDAM2021} also proposed a protocol for $f$ using $O(C_f+n)$ cards and two rounds of shuffle and opening operations.
For symmetric functions, tailor-made constructions by \cite{NishidaTAMC2015,ShikataICTAC2022,TakahashiAPKC2024} reduce the number of operations and the number of cards. 
Several works have also focused on minimizing the number of cards and shuffle operations for small functions (e.g., \cite{MiyaharaTCS2020,RuangwisesTAMC2020,RuangwisesSecITC2021,OK24}).

\smallskip
\noindent\textbf{Applications.}
Shinagawa and Nuida \cite{SN25} showed that a certain type of card-based secure computation protocols can be transformed into PSM protocols for the same function.
In particular, a single-shuffle full-open protocol using $\ell$ cards implies a PSM protocol with communication complexity $\ell n$ and randomness complexity $\ell\log \ell+\ell n$.
Prior to our work, this was the only known result demonstrating a connection between card-based protocols and PSM protocols.
Our work complements their result by showing the converse implication from PSM protocols to single-shuffle full-open protocols.
Card-based protocols have also found applications beyond secure computation such as in zero-knowledge proofs \cite{NiemiFundamInf1999,GradwohlToCS2009,MiyaharaProvsec2021} and in simulating players of card games \cite{ShinagawaToCS2025}.

\section{Overview of Our Techniques}\label{sec:overview}

We here provide an overview of our techniques.
More detailed descriptions and proofs are given in the following sections.

\subsection{Single-Shuffle Full-Open Protocols from PSM}

We denote $[n]=\{1,\ldots,n\}$ for $n\in\mathbb{N}$.
In this paper, we use binary cards whose front sides have a suit $\heartsuit$ or $\clubsuit$. 
A single-shuffle full-open protocol for a function $f:\bin^n\ra\bin$ specifies $2n$ sequences of suits $(L_i^b)_{i\in[n],b\in\bin}$.
Each party with input $x_i\in\bin$ puts face-down cards whose suits correspond to $L_i^{x_i}$ on the table.
Then, a random permutation sampled from a probability distribution $\mathcal{D}$ is applied to the face-down cards, where the permutation is unknown to any party.
Finally, parties turn over all cards and obtain an output $y=f(x_1,\ldots,x_n)$ from the revealed suits.
The privacy requirement is that the sequence of the revealed suits leaks nothing about $(x_i)_{i\in[n]}$ beyond what follows from $y$.
While we can extend the notion into the setting where only a subset of cards are turned over, we focus on protocols where all cards are opened in this section.

As mentioned earlier, our first construction is based on a cryptographic primitive known as PSM protocols \cite{FKN94,IK97}.
This primitive enables $n$ parties each holding an input $x_i$ to reveal $f(x_1,\ldots,x_{n})$ to an external party called a referee without revealing any extra information.
Specifically, it consists of three algorithms: 
A randomness generation algorithm $\mathsf{Gen}$ takes no input and outputs a random string $\rho$, which is shared among all parties except for the referee;
An encoding algorithm $\mathsf{Enc}$ takes a party's index $i\in[n]$, a party's input $x_i$, and shared randomness $\rho$ as inputs, and outputs a message $\mu_i$, which is sent to the referee;
A decoding algorithm $\mathsf{Dec}$ takes $n$ messages $(\mu_i)_{i\in[n]}$ as input and outputs $y=f(x_1,\ldots,x_{n})$.
The privacy requirement is that the messages $(\mu_i)_{i\in[n]}$ reveal nothing about $(x_i)_{i\in[n]}$ other than $y=f(x_1,\ldots,x_{n})$.
We define randomness complexity as the bit-length of $\rho$, denoted by $r$, and communication complexity as the total bit-length of $(\mu_i)_{i\in[n]}$, denoted by $c$.
To simplify the exposition, we here assume that $\rho$ is uniformly distributed over $\bin^r$, while our full construction allows $\rho$ to be sampled from a subset of $\bin^r$.

We follow an encoding rule $\heartsuit\,\clubsuit=1$, $\clubsuit\,\heartsuit=0$ and encode a bit string into a sequence of suits bit by bit.
For example, a string $101$ is encoded into $\heartsuit\,\clubsuit\,\clubsuit\,\heartsuit\,\heartsuit\,\clubsuit$.
Let $\Sigma$ be a PSM protocol for a function $f:\bin^n\ra\bin$.
For each $i\in[n]$, each $b\in\bin$, and each random string $\rho\in\bin^r$, we define a sequence of suits $m_i^b(\rho)$ as the encoding of a party $i$'s message $\mathsf{Enc}(i,b,\rho)$ with input $b$ and shared randomness $\rho$.
Let $L_i^b$ be a sequence of suits obtained by concatenating the $m_i^b(\rho)$'s over all possible $\rho$'s.
Here, we fix an order on the set of all $\rho$'s (e.g., a lexicographical order) and enumerate them as $\{\rho_1,\ldots,\rho_{2^r}\}$.

Suppose that parties place face-down cards whose suits correspond to $L_1^{x_1},\ldots,L_n^{x_n}$.
This sequence can be rearranged into the concatenation of $M(\rho):=(m_1^{x_1}(\rho),\ldots,m_n^{x_n}(\rho))$ over all $\rho$'s.
Note that it suffices to open only the set of cards corresponding to $M(\rho)$ for a uniformly random $\rho$, since this is equivalent to revealing the messages that the referee receives in the PSM protocol $\Sigma$.
We apply a pile-shifting shuffle to obtain a sequence of cards with suits $M(\rho_s),M(\rho_{s+1}),\ldots,M(\rho_{s-1})$, where $s$ is a uniformly random shift.
To eliminate all information except that of $M(\rho_{s})$, we then apply a complete shuffle to the remaining cards $M(\rho_{s+1}),\ldots,M(\rho_{s-1})$.
As a result, we obtain a sequence of face-down cards whose suits consist of $M(\rho_s)$ followed by a uniformly random arrangement of suits.
Finally, we turn over all cards and compute $y=f(x_1,\ldots,x_n)$ by applying the decoding algorithm of $\Sigma$ to the bit string $(\mathsf{Enc}(i,x_i,\rho_s))_{i\in[n]}$ encoded in $M(\rho_s)$.
If the communication and randomness complexity of $\Sigma$ are $c$ and $r$, respectively, then the number of cards required to encode each $M(\rho)$ is $2c$ and hence the total number of cards is $2c\cdot 2^r=c2^{r+1}$.
The state-of-the-art PSM protocol for a general function has communication and randomness complexity $O(n^32^{n/2})$ \cite{BKN18}.
Therefore, the resulting single-shuffle full-open protocol uses $2^{O(n^3 2^{n/2})}$ cards.

\subsection{Reducing the Number of Cards}

Next, we show a different construction of single-shuffle full-open protocols that uses a smaller number of cards at the cost of requiring more complicated shuffles. 

\smallskip
\noindent\textbf{Expressing General Functions by AND Functions.}
Let $f:\bin^n\ra\bin$ be a function and $f^{-1}(1)=\{\bm{a}^{(1)},\ldots,\bm{a}^{(N)}\}$, where $\bm{a}^{(j)}=(a_1^{(j)},\ldots,a_n^{(j)})\in\bin^n$ is an input on which $f$ evaluates to $1$.
For each $i\in[n]$, each $j\in[N]$, and each $x_i\in\bin$, we define $b_i^{(j)}(x_i)\in\bin$ as $b_i^{(j)}(x_i)=1$ if and only if $x_i=a_i^{(j)}$.
For any $\bm{x}=(x_1,\ldots,x_n)\in\bin^n$ with $f(\bm{x})=1$, there exists a unique $j\in[N]$ such that $\bm{x}=\bm{a}^{(j)}$; and if $f(\bm{x})=0$, then $\bm{x}\neq\bm{a}^{(j)}$ for all $j\in[N]$.
Thus, it holds that
\begin{align*}
    f(x_1,\ldots,x_n)=\sum_{j\in[N]}\mathsf{AND}_n(b_1^{(j)}(x_1),\ldots,b_n^{(j)}(x_n))
\end{align*}
for any $(x_1,\ldots,x_n)\in\bin^n$, where $\mathsf{AND}_n$ is the $n$-input AND function, i.e., $\mathsf{AND}_n(y_1,\ldots,y_n)=1$ if and only if $y_1=\cdots=y_n=1$.

Assume that we are given a single-shuffle full-open protocol $\Pi_0$ for $\mathsf{AND}_n$.
Then, we can construct a single-shuffle full-open protocol $\Pi$ for $f$ as follows.
Each party $i$ with input $x_i$ computes bits $b_i^{(j)}(x_i)$ for all $j\in[N]$.
Then, for each $j\in[N]$, parties execute $\Pi_0$ on inputs $(b_i^{(j)}(x_i))_{i\in[n]}$.
That is, each party $i$ places a sequence of face-down cards corresponding to $b_i^{(j)}(x_i)$ following the specification of $\Pi_0$, and then apply a shuffle specified by $\Pi_0$.
At this point, if $f(\bm{x})=1$, then the resulting card sequence consists of a single pile of cards corresponding to an output $1$ and $N-1$ piles of cards corresponding to an output $0$, where the position of the unique pile is determined by $j\in[N]$ such that $\bm{a}^{(j)}=\bm{x}$.
If $f(\bm{x})=0$, then the card sequence consists of $N$ piles of cards corresponding to an output $0$.
Thus, we can determine $f(\bm{x})$ from the revealed cards.
Note that before turning over cards, we need to apply a pile-scramble shuffle to ensure that the position of the pile corresponding to $1$ is independent of $\bm{x}$.
Then, the privacy of $\Pi$ directly follows from that of $\Pi_0$.
If $\Pi_0$ uses $\ell$ cards, then $\Pi$ uses $\ell N\leq \ell 2^n$ cards. 

If we instantiate $\Pi_0$ with our general protocol in the previous section, then $\ell=(n\log n)2^{O(n\log n)}=2^{O(n\log n)}$ since there exists a PSM protocol for $\mathsf{AND}_n$ with communication and randomness complexity $O(n\log n)$ \cite[Claim 4.3]{Ish13}.
Thus, the total number of cards is $2^{O(n\log n)}\cdot 2^n=2^{O(n\log n)}$, which reduces the number of cards $2^{O(n^32^{n/2})}$ in our previous protocol.
In the following, we further improve the number of cards by showing a single-shuffle full-open protocol tailored to computing $\mathsf{AND}_n$ that uses only $\ell=O(n^2)$ cards.
This leads to a protocol for general $f$ using $O(n^22^n)$ cards.

\smallskip
\noindent\textbf{A Single-Shuffle Full-Open Protocol for $\mathsf{AND}_n$.}
A key observation is that the PSM protocol for $\mathsf{AND}_n$ in \cite{Ish13} has a special algebraic structure, which we refer to as \textit{additive PSM}.
Let $m$ be a prime, $\mathbb{Z}_m$ denote the ring of integers modulo $m$, and $\mathbb{Z}_m^*$ denote its multiplicative group.
We call a PSM protocol additive over $\mathbb{Z}_m$ if shared randomness consists of an element $u$ sampled uniformly at random from a subgroup $U\subseteq\mathbb{Z}_m^*$ and additive shares $(r_i)_{i\in[n]}\in\mathbb{Z}_m^n$ of zero, which means that $(r_i)_{i\in[n]}$ are sampled uniformly at random conditioned on $\sum_{i\in[n]}r_i=0 \bmod m$.
Furthermore, a message of each party $i$ with input $x_i\in\bin$ has the form of $uy_i-r_i$, where $y_i\in\mathbb{Z}_m$ is determined by $x_i$.
The output is computed based on $\sum_{i\in[n]}(uy_i-r_i)=u\sum_{i\in[n]}y_i$.

We show a construction of single-shuffle full-open protocols from additive PSM protocols.
For several functions including $\mathsf{AND}_n$, this can improve the number of cards in our previous construction from arbitrary PSM protocols.
Our construction takes two steps:
First, we transform $n$ piles of face-down cards each encoding $y_i$ into $n$ piles each encoding $z_i:=uy_i$ for a uniformly random element $u\in U$, which is unknown to any party.
Second, we transform the resulting $n$ piles into $n$ piles each encoding $z_i-r_i \bmod m$, where $(r_i)_{i\in[n]}$ are random additive shares of zero.
Here, we use a different encoding rule: $x\in\mathbb{Z}_m$ is encoded into a sequence $\henc_m(x)$ of suits such that $\heartsuit$ is placed at $x$-th position and $\clubsuit$ is placed elsewhere, i.e.,
\begin{align*}
    \henc_m(x):=\overset{0}{\blk}\,\cdots\,\overset{x}{\red}\,\cdots\,\overset{m-1}{\blk}.
\end{align*}

To describe the first procedure, we here assume for simplicity that $U=\mathbb{Z}_m^*=\{1,2,\ldots,m-1\}$.
Since $m$ is a prime, $U$ is a cyclic group with a generator $\alpha$ and we can write $U=\{\alpha,\alpha^2,\ldots,\alpha^{m-1}\}$.
Each party with input $x_i$ computes $y_i\in\mathbb{Z}_m$ and places face-down cards with suits $\henc_m(y_i)$.
For each $\gamma\in\mathbb{Z}_m$, let $c_{i,\gamma}$ denote the $\gamma$-th suit of $\henc_m(y_i)$, that is, $c_{i,\gamma}=\heartsuit$ if and only if $\gamma=y_i$.
We rearrange the face-down cards and construct $m$ piles $p(1),\ldots,p(m-1)$, where each $p(j)$ consists of the $\alpha^j$-th cards of $\henc_m(y_i)$ for all $i\in[n]$, i.e., $p(j)=(c_{i,\alpha^j})_{i\in[n]}$.
We apply a pile-shifting shuffle to $p(1),\ldots,p(m-1)$ and denote the resulting piles by $p'(1),\ldots,p'(m-1)$.
Then, there exists a uniformly random shift $\Delta$ such that $p'(j)=p(j-\Delta)$ for all $j$, where indices are modulo $m-1$.
We define $c_{i,0}'=c_{i,0}$ and for $\gamma\neq0$, define $c_{i,\gamma}'$ as the $i$-th card of $p'(\log_\alpha\gamma)$.
Finally, we output $(c_{1,\gamma}')_{\gamma\in\mathbb{Z}_m},\ldots,(c_{n,\gamma}')_{\gamma\in\mathbb{Z}_m}$ with all cards face down.
Here, we define $\log_\alpha\gamma$ for $\gamma\in\mathbb{Z}_m\setminus\{0\}$ as $j$ such that $\gamma=\alpha^j$.
It holds that $c_{i,\gamma}'=c_{i,\alpha^{\log_\alpha\gamma-\Delta}}=c_{i,\gamma u^{-1}}$, where $u:=\alpha^{\Delta}$.
Thus, $c_{i,\gamma}'=\heartsuit$ if and only if $\gamma=y_i u$, which means that each output pile $(c_{i,\gamma}')_{\gamma\in\mathbb{Z}_m}$ is equal to $\henc_m(y_iu)$.
Since $\Delta$ is uniformly distributed over $\{1,\ldots,m-1\}$, $u$ is uniformly distributed over $U=\mathbb{Z}_m^*$.

The second procedure transforms $n$ piles of cards each with suits $\henc_m(z_i)$ into $n$ piles each with suits $\henc_m(z_i-r_i)$, where $(r_i)_{i\in[n]}$ are random additive shares of zero.
We first transform $\henc_m(z_1)$ into $\henc_m(-z_1)$ simply by reversing the order of the last $m-1$ cards:
\[
\underbrace{\overset{0}{\back}\,\overset{1}{\back}\,\overset{2}{\back}\,\cdots\,\overset{m-1}{\back}}_{\henc_{m}(z_1)}\, \rightarrow \, 
\underbrace{\overset{0}{\back}\,\overset{m-1}{\back}\,\cdots\,\overset{2}{\back}\,\overset{1}{\back}}_{\henc_{m}(-z_1)}.
\]
We then pair two sequences $\henc_m(-z_1)$ and $\henc_m(z_n)$, and apply a pile-shifting shuffle:
\[
\begin{tabular}{c}
$\overbrace{\back \, \back \, \cdots \, \back}^{\henc_{m}(-z_1)}$\vspace{1mm}\\
$\underbrace{\back \, \back \, \cdots \, \back}_{\henc_{m}(z_n)}$
\end{tabular}
\ra
\begin{tabular}{c}
$\overbrace{\back \, \back \, \cdots \, \back}^{\henc_{m}(-z_1+r_1)}$\vspace{1mm}\\
$\underbrace{\back \, \back \, \cdots \, \back}_{\henc_{m}(z_n+r_1)}$
\end{tabular}
\]
Here, $r_1$ is a uniformly random shift.
Finally, we reverse the order of the last $m-1$ cards of $\henc_m(-z_1+r_1)$.
As a result, we obtain two sequences $\henc_m(z_1-r_1)$ and $\henc_m(z_n+r_1)$.
We apply the same procedure to a pair of $\henc_m(z_2)$ and $\henc_m(z_n+r_1)$, and obtain $\henc_m(z_2-r_2)$ and $\henc_m(z_n+r_1+r_2)$, where $r_2$ is another uniformly random shift.
By continuing these procedures, we finally obtain $\henc_m(z_1-r_1),\ldots,\henc_m(z_n-r_n)$, where $r_1,\ldots,r_{n-1}$ are uniformly at random and $r_n=-\sum_{i=1}^{n-1}r_i$.

Now, we can transform an additive PSM protocol $\Sigma$ for $f$ into a single-shuffle full-open protocol $\Pi$ for $f$.
After performing the above two procedures, we obtain face-down cards with suits $\henc_m(uy_1-r_1),\ldots,\henc_m(uy_n-r_n)$, where $u$ is a uniformly random element of $U$ and $(r_i)_{i\in[n]}$ are random additive shares of zero.
We turn over all cards and run the decoding algorithm of $\Sigma$ on $(uy_i-r_i)_{i\in[n]}$.
The correctness of $\Sigma$ ensures that the output is $f(x_1,\ldots,x_n)$ and the privacy of $\Sigma$ ensures that the revealed suits leak nothing but $f(x_1,\ldots,x_n)$.
The number of cards in $\Pi$ is $mn$.
Since $\mathsf{AND}_n$ admits an additive PSM protocol over $\mathbb{Z}_m$ such that $m=O(n)$ \cite{Ish13}, we obtain a single-shuffle full-open protocol for $\mathsf{AND}_n$ using $O(n^2)$ cards.

\smallskip
\noindent\textbf{Putting It All Together.}
By plugging the protocol for $\mathsf{AND}_n$ into the above construction, we obtain a single-shuffle full-open protocol for any function $f:\bin^n\ra\bin$ using at most $O(n^2)\cdot 2^n=O(n^22^n)$ cards.
This improves upon our previous construction, which requires $2^{O(n^32^{n/2})}$ cards.
However, it comes at the cost of more complicated shuffles: this protocol requires a composition of $O(n2^n)$ pile-shifting shuffles followed by one pile-scramble shuffle, while our previous protocol needs only one pile-shifting shuffle and one complete shuffle.

\section{Preliminaries}

\smallskip
\noindent\textbf{Notations.}
For a natural number $n\in\mathbb{N}$, we denote $[n]=\{1,2,\ldots,n\}$.
Let $\mathfrak{S}_N$ denote the set of all permutations over $[N]$.
For a probability distribution $\mathcal{D}$, we write $\rho\la\mathcal{D}$ if an element $\rho$ is sampled according to $\mathcal{D}$.
We denote the support of $\mathcal{D}$, i.e., the set of all possible outcomes, by $\mathrm{supp}(\mathcal{D})$.
For a set $X$, we write $\rho\sample X$ if an element $\rho$ is sampled uniformly at random from $X$.
For an integer $m\geq 2$, we denote the ring of integers modulo $m$ by $\mathbb{Z}_m$ and identify it with $\{0,1,\ldots,m-1\}$ as a set.
Let $\mathbb{Z}_m^*$ denote the multiplicative group of integers modulo $m$ and naturally identify it with a subset of $\{0,1,\ldots,m-1\}$.
We call a function $f:\bin^n\ra\bin$ symmetric if $f(x_{\sigma(1)},\ldots,x_{\sigma(n)})=f(x_1,\ldots,x_n)$ for any input $(x_i)_{i\in[n]}\in\bin^n$ and any permutation $\sigma\in\mathfrak{S}_n$.
We define the $n$-input AND function $\mathsf{AND}_n:\bin^n\ra\bin$ as $\mathsf{AND}_n(x_1,\ldots,x_n)=1$ if and only if $x_1=\cdots=x_n=1$.

\subsection{Card-Based Protocols}

\smallskip
\noindent\textbf{Cards.}
In this paper, we use \textit{binary cards} whose front sides are either $\blk$ or $\red$ and back sides are both $\back\,$. 
We assume that two cards with the same suit are indistinguishable. 
For $x\in\mathbb{Z}_m$, let $\henc_{m}(x)\in\{\heartsuit,\clubsuit\}^m$ denote the sequence of suits such that $\heartsuit$ is placed at $x$-th position and $\clubsuit$ is placed elsewhere.

\smallskip
\noindent\textbf{Shuffles.}
A shuffle is an operation that applies a random permutation to a sequence of $\ell$ face-down cards, where the permutation is chosen from some probability distribution over $\mathfrak{S}_\ell$.
We allow a shuffle to be applied not only to the entire sequence of cards but also to a subsequence of cards.
We assume that no party guesses which permutation is chosen during the shuffle. 
Below, we introduce basic shuffles.

A \textit{complete shuffle} is a shuffle that applies a uniformly random permutation to a sequence of $\ell$ face-down cards, which is denoted by $[\cdot]$. 
For example, a complete shuffle for a sequence of three cards results in one of the six sequences each with probability $1/6$ as follows:
\[
\Bigl[ \; \overset{1}{\back} \, \overset{2}{\back} \, \overset{3}{\back} \; \Bigr]
~\rightarrow~
\overset{1}{\back} \, \overset{2}{\back} \, \overset{3}{\back} ~\mbox{or}~ \overset{1}{\back} \, \overset{3}{\back} \, \overset{2}{\back} ~\mbox{or}~
\overset{2}{\back} \, \overset{3}{\back} \, \overset{1}{\back} ~\mbox{or}~ \overset{2}{\back} \, \overset{1}{\back} \, \overset{3}{\back} ~\mbox{or}~
\overset{3}{\back} \, \overset{1}{\back} \, \overset{2}{\back} ~\mbox{or}~ \overset{3}{\back} \, \overset{2}{\back} \, \overset{1}{\back}\,.
\]

A \emph{pile-scramble shuffle} \cite{IshikawaUCNC2015} is a shuffle that divides a sequence of cards into multiple \emph{piles} of the same number of cards and applies a random permutation to a sequence of piles, which is denoted by $[\cdot | \cdots | \cdot]$. 
For example, a pile-scramble shuffle for a sequence of three piles each having two cards results in one of the six sequences each with probability $1/6$ as follows:
\[
\Bigl[ \; \overset{1}{\back} \, \overset{2}{\back} \;\Big|\; \overset{3}{\back} \, \overset{4}{\back} \;\Big|\; \overset{5}{\back} \, \overset{6}{\back} \; \Bigr]
~\rightarrow~
\begin{cases}
~\overset{1}{\back} \, \overset{2}{\back} ~~ \overset{3}{\back} \, \overset{4}{\back} ~~ \overset{5}{\back} \, \overset{6}{\back}\\
~\overset{1}{\back} \, \overset{2}{\back} ~~ \overset{5}{\back} \, \overset{6}{\back} ~~ \overset{3}{\back} \, \overset{4}{\back}\\
~\overset{3}{\back} \, \overset{4}{\back} ~~ \overset{5}{\back} \, \overset{6}{\back} ~~ \overset{1}{\back} \, \overset{2}{\back}\\
~\overset{3}{\back} \, \overset{4}{\back} ~~ \overset{1}{\back} \, \overset{2}{\back} ~~ \overset{5}{\back} \, \overset{6}{\back}\\
~\overset{5}{\back} \, \overset{6}{\back} ~~ \overset{1}{\back} \, \overset{2}{\back} ~~ \overset{3}{\back} \, \overset{4}{\back}\\
~\overset{5}{\back} \, \overset{6}{\back} ~~ \overset{3}{\back} \, \overset{4}{\back} ~~ \overset{1}{\back} \, \overset{2}{\back}
\end{cases}.
\]

A \textit{random cut} is a shuffle that applies a cyclic shift to a sequence of cards a uniformly random number of times, which is denoted by $\langle \cdot \rangle$. For example a random cut for a sequence of four cards results in one of the four sequences each with probability $1/4$ as follows:
\[
\Bigl\langle \; \overset{1}{\back} \, \overset{2}{\back} \, \overset{3}{\back} \, \overset{4}{\back} \;\Bigr\rangle
~\rightarrow~
\overset{1}{\back} \, \overset{2}{\back} \, \overset{3}{\back} \, \overset{4}{\back} ~\mbox{or}~
\overset{2}{\back} \, \overset{3}{\back} \, \overset{4}{\back} \, \overset{1}{\back} ~\mbox{or}~ \overset{3}{\back} \, \overset{4}{\back} \, \overset{1}{\back} \, \overset{2}{\back} ~\mbox{or}~
\overset{4}{\back} \, \overset{1}{\back} \, \overset{2}{\back} \, \overset{3}{\back}\,.
\]

A \textit{pile-shifting shuffle} is a shuffle that divides a sequence of cards into multiple piles of the same number of cards and applies a cyclic shift to a sequence of cards a uniformly random number of times, which is denoted by $\langle \cdot | \cdots | \cdot \rangle$. 
For example, a pile-shifting shuffle for a sequence of three piles each having two cards results in one of the three sequences each with probability $1/3$ as follows:
\[
\Bigl\langle \; \overset{1}{\back} \, \overset{2}{\back} \;\Big|\; \overset{3}{\back} \, \overset{4}{\back} \;\Big|\; \overset{5}{\back} \, \overset{6}{\back} \; \Bigr\rangle
~\rightarrow~
\overset{1}{\back} \, \overset{2}{\back} \, \overset{3}{\back} \, \overset{4}{\back} \, \overset{5}{\back} \, \overset{6}{\back} ~\mbox{or}~
\overset{3}{\back} \, \overset{4}{\back} \, \overset{5}{\back} \, \overset{6}{\back} \, \overset{1}{\back} \, \overset{2}{\back} ~\mbox{or}~ 
\overset{5}{\back} \, \overset{6}{\back} \, \overset{1}{\back} \, \overset{2}{\back} \, \overset{3}{\back} \, \overset{4}{\back}\,.
\]

A complete shuffle is preferable to a pile-scramble shuffle since the latter requires envelopes or sleeves to make piles.
For the same reason, a random cut is preferable to a pile-shifting shuffle.
We consider that a complete shuffle and a random cut are incomparable.

A \textit{composition} of two shuffles is an operation that applies two shuffles to a sequence consecutively. 
We regard any composition of shuffles as a single shuffle.
Formally, let $\mathcal{F}_1, \mathcal{F}_2$ be the probability distributions of the first and the second shuffles, respectively. 
Then the probability distribution $\mathcal{F}$ of the composed shuffle is defined as
\[
\Pr[\sigma \leftarrow \mathcal{F}] = \sum_{\substack{\sigma_1, \sigma_2:\\\,\sigma_2 \sigma_1 = \sigma}} \Pr[\sigma_1 \leftarrow \mathcal{F}_1] \cdot \Pr[\sigma_2 \leftarrow \mathcal{F}_2].
\]

\smallskip
\noindent\textbf{Protocols.}
We follow the formalization of card-based protocols in \cite{MizukiIJIS2014}.
Since we focus on protocols with a single shuffle, the formalization is simplified as follows.

\begin{definition}\label{definition:ssfo}
    A \textup{single-shuffle card-based protocol $\Pi$} for a function $f:\bin^n\ra\bin$ is specified by $2n$ sequences of suits $(L_i^b)_{i\in[n],b\in\bin}$ with $L_i^b\in\{\heartsuit,\clubsuit\}^{\ell_i}$, a probability distribution $\mathcal{D}$ over $\mathfrak{S}_{\ell}$, and an index set $R\subseteq[\ell]$ of size $k$, where $\ell=\sum_{i\in[n]}\ell_i$.
    For an input $\bm{x}=(x_1,\ldots,x_n)\in\bin^n$, a final state $\mathsf{st}_{\bm{x}}\in\{\heartsuit,\clubsuit\}^k$ of $\Pi$ is defined as a sequence of suits obtained through the following procedures:
    \begin{enumerate}
        \item Prepare a sequence of $\ell$ face-down cards with suits $(L_1^{x_1},\ldots,L_n^{x_n})$.
        \item Sample a permutation $\pi\leftarrow\mathcal{D}$ and permute the sequence of the face-down cards according to $\pi$.
        \item Reveal the cards at positions specified by the index set $R$.
        \item Let $\mathsf{st}_{\bm{x}}$ be the sequence of suits of the revealed cards.
    \end{enumerate}
    $\Pi$ is required to satisfy the following properties:
    \begin{description}
        \item[Correctness.]
        There exists a function $\mathsf{Output}:\{\heartsuit,\clubsuit\}^k\ra\bin$ such that for any input $\bm{x}\in\bin^n$, it holds that $\prob{\mathsf{Output}(\mathsf{st}_{\bm{x}})=f(\bm{x})}=1$, where the randomness is taken over the choice of $\pi\la\mathcal{D}$.
        \item[Privacy.]
        For any pair of inputs $\bm{x},\bm{x}'\in\bin^n$ with $f(\bm{x})=f(\bm{x}')$, the distribution of $\mathsf{st}_{\bm{x}}$ is identical to that of $\mathsf{st}_{\bm{x}'}$.
    \end{description}
    We define the \textup{number of cards} of $\Pi$ as the minimum value of $\ell_{\heartsuit}+\ell_{\clubsuit}$ over all $(\ell_{\heartsuit},\ell_{\clubsuit})$'s such that for any $(x_i)_{i\in[n]}$, the sequence of suits $(L_1^{x_1},\ldots,L_n^{x_n})$ contains at most $\ell_{\heartsuit}$ $\heartsuit$'s and $\ell_{\clubsuit}$ $\clubsuit$'s\footnote{In all protocols presented in the paper, $L_i^0$ contains the same number of $\heartsuit$ (and hence the same number of $\clubsuit$) as $L_i^1$. In this case, the number of cards of $\Pi$ equals $\ell=\sum_{i\in[n]}\ell_i$.}.
    The \textup{shuffle} of $\Pi$ refers to the distribution $\mathcal{D}$.
\end{definition}

If $R=[\ell]$, then we say that $\Pi$ is a single-shuffle \textit{full-open} protocol.
If $R$ is a strict subset of $[\ell]$, then we call $\Pi$ a single-shuffle \textit{static-open} protocol.
We also introduce a single-shuffle \textit{adaptive-open} protocol, in which the position of the next card to be turned face up is determined based on the suits of the cards that have already been revealed.
Formally, an adaptive-open protocol additionally specifies a function that takes a sequence of pairs $(p_i,s_i)\in[\ell]\times\{\heartsuit,\clubsuit\}$ of a position and a suit as input, and outputs the next position $p\in[\ell]$.

\subsection{Private Simultaneous Messages Protocols}

A Private Simultaneous Messages (PSM) protocol \cite{FKN94,IK97} is a cryptographic primitive that enables $n$ input parties each holding an input $x_i$ to reveal $f(x_1,\ldots,x_{n})$ to an external party without revealing any extra information.
Below, we give the formal definition.

\begin{definition}
    A \textup{Private Simultaneous Messages protocol $\Sigma$} (or a \textup{PSM} protocol for short) for a function $f:\bin^n\rightarrow\bin$ is specified by three algorithms $\mathsf{Gen}$, $\mathsf{Enc}$, and $\mathsf{Dec}$, where:
    \begin{itemize}
        \item $\mathsf{Gen}()\ra \rho$: $\mathsf{Gen}$ is a randomized algorithm that takes no input and outputs randomness $\rho\la\mathcal{D}$, where $\mathcal{D}$ is a probability distribution over $\bin^r$.
        \item $\mathsf{Enc}(i,x_i,\rho)\ra \mu_i$: $\mathsf{Enc}$ is a deterministic algorithm that takes $i\in[n]$, $x_i\in\bin$, and $\rho\in\bin^r$ as input and outputs a message $\mu_i\in\bin^{c_i}$.
        \item $\mathsf{Dec}(\mu_1,\ldots,\mu_n)\ra y$: $\mathsf{Dec}$ is a deterministic algorithm that takes $n$ messages $(\mu_i)_{i\in[n]}$ as input and outputs $y\in\bin$.
    \end{itemize}
    satisfying the following properties:
    \begin{description}
        \item[Correctness.] For any $\bm{x}=(x_i)_{i\in[n]}$, it holds that
        \begin{align*}
            \prob{\mathsf{Dec}((\mathsf{Enc}(i,x_i,\rho))_{i\in[n]})=f(x_1,\ldots,x_{n})}=1,
        \end{align*}
        where the probability is taken over the choice of $\rho\la\mathsf{Gen}()$.
        \item[Privacy.] For any $\bm{x}=(x_i)_{i\in[n]},\bm{x}'=(x_i')_{i\in[n]}$ with $f(\bm{x})=f(\bm{x}')$, the distribution of $(\mathsf{Enc}(i,x_i,\rho))_{i\in[n]}$ induced by $\rho\la\mathsf{Gen}()$ is identical to that of $(\mathsf{Enc}(i,x_i',\rho'))_{i\in[n]}$ induced by $\rho'\la\mathsf{Gen}()$.
    \end{description}
    We define the \textup{randomness complexity} of $\Sigma$ as $r$ and the \textup{communication complexity} of $\Sigma$ as $\sum_{i\in[n]}c_i$.
\end{definition}

By default, we suppose that $\mathcal{D}$ is a uniform distribution over a subset of $\bin^r$.

The state-of-the-art PSM protocol for any function is given by \cite{BKN18}.

\begin{proposition}[Beimel et al. \cite{BKN18}]\label{proposition:psm-all}
For any function $f:\bin^n\ra\bin$, there exists a PSM protocol for $f$ with communication and randomness complexity $O(n^3 2^{n/2})$.
\end{proposition}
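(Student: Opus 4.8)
The plan is to reconstruct the construction of Beimel et al. The starting point is the folklore \emph{truth-table} PSM, which already gives communication and randomness $O(2^n)$ for an arbitrary $f$: the parties additively mask their bits with shared randomness so that the referee obtains a uniformly random shift $y=\bm{x}\oplus\rho$, one party (who, like everyone, knows all of $\rho$) sends the correspondingly permuted truth table of $f$ masked by a further shared random string of length $2^n$, and the referee un-masks only the single entry it needs using one more short message. The goal is to replace one of the two factors of $2^{n}$ by $2^{n/2}$.

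First I would split the variables into two blocks $X=(x_1,\dots,x_{n/2})$ and $Y=(x_{n/2+1},\dots,x_{n})$, identify each block with an element of $[N]$ for $N=2^{n/2}$, and view $f$ as a two-argument function given by an $N\times N$ matrix $M$ over $\bin$. Between two \emph{virtual} parties $A$ (holding $X$) and $B$ (holding $Y$), such a function admits a PSM of communication and randomness $O(N)=O(2^{n/2})$: using a shared uniform shift $j\in\mathbb{Z}_N$ and a shared uniform string $z\in\bin^{N}$, party $A$ sends the pair $(X+j \bmod N,\ z[X+j\bmod N])$ and party $B$ sends the length-$N$ vector $w$ whose $i$-th entry is $M[i-j \bmod N,\,Y]\oplus z[i]$; writing $\alpha=X+j\bmod N$, the referee outputs $w[\alpha]\oplus z[\alpha]=M[X,Y]=f(\bm{x})$, and a short simulation argument (the referee sees a uniform $\alpha$, a uniform bit $z[\alpha]$, and a vector $w$ whose $\alpha$-th entry equals $f(\bm{x})\oplus z[\alpha]$ and whose other entries are uniform and independent) shows that only $f(\bm{x})$ leaks.

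The substantive step is to turn this two-virtual-party protocol into a genuine $n$-party one. The plan is to \emph{distribute} the role of $A$ among parties $1,\dots,n/2$ and the role of $B$ among parties $n/2+1,\dots,n$, composing with sub-PSMs that evaluate the virtual parties' encoding maps, treating the shared randomness as a fixed parameter and the block's bits as the inputs. Party $A$'s maps are easy: $X\mapsto X+j\bmod N$ is iterated addition with a one-bit carry, i.e.\ a bounded-width branching program of length $O(n)$, and $X\mapsto z[X+j\bmod N]$ is a single indexed lookup, computed by a branching program of size $O(N)$; running an Ishai--Kushilevitz-style PSM for branching programs on each incurs only a $\mathrm{poly}(n)$ multiplicative overhead. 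Party $B$'s map produces the length-$N$ masked, cyclically shifted column $w=\mathrm{shift}_j(M[\cdot,Y])\oplus z$; realizing this by a PSM of cost $O(2^{n/2}\cdot\mathrm{poly}(n))$ rather than the naive $O(2^{n})$ is the technical core, and relies on the fact that selecting the $Y$-th column of $M$ is a ``decode-then-linear'' operation with a branching program of size $O(N)$. A careful accounting of all the sub-PSMs then yields communication and randomness $O(n^3 2^{n/2})$.

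The hard part is exactly twofold. First, one must organize the construction --- in particular the column-selection map on the $B$-side and the lookup $z[\cdot]$ on the $A$-side --- so that \emph{no} component secretly requires communicating or randomizing an object of size $2^{n}$: each output coordinate of $w$ is, on its own, an arbitrary Boolean function of $n/2$ bits, so a naive coordinate-by-coordinate treatment would cost $\approx 2^{n/2}\cdot 2^{n/2}=2^{n}$, and it is precisely the product structure of $M$ as an $N\times N$ matrix that makes a better realization possible. Second, one must prove privacy of the composed protocol by a hybrid argument: the referee's view is the tuple of outputs of all sub-PSMs, and one replaces each such output by its simulator (justified by that sub-PSM's privacy) until the view depends only on the virtual messages of $A$ and $B$, at which point the two-party simulator above finishes the job; some care is needed because the sub-PSMs share the same randomness tape $\rho$ with the outer construction, so the hybrids must be arranged to keep the pieces of $\rho$ on which each simulator relies mutually independent. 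The randomness-complexity bound then follows from the same accounting as for communication.
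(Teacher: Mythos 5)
This proposition is imported from Beimel et al.\ \cite{BKN18} as a black box; the paper gives no proof of it, so the only ``proof'' expected here is the citation. Your attempt to reconstruct the underlying construction is a reasonable high-level outline of the known approach (split the $n$ inputs into two blocks of $n/2$, view $f$ as an $N\times N$ matrix with $N=2^{n/2}$, run a communication-$O(N)$ two-party PSM on the blocks, and distribute each virtual party's encoding among the $n/2$ real parties via sub-PSMs), and your two-party protocol and its simulator are correct. But as written it is a plan, not a proof: you explicitly defer the ``technical core'' (realizing the $B$-side map at cost $2^{n/2}\cdot\mathrm{poly}(n)$) and the composed privacy argument, and these are precisely where all the work in \cite{BKN18} lies.

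Moreover, the specific tool you propose for the deferred step would not deliver the claimed bound. The Ishai--Kushilevitz PSM for a branching program of size $s$ has communication and randomness polynomial (in particular at least quadratic) in $s$, not $s\cdot\mathrm{poly}(n)$; the ``$\mathrm{poly}(n)$ multiplicative overhead'' you invoke is only valid when $s=\mathrm{poly}(n)$. For the lookup $X\mapsto z[X+j\bmod N]$ on the $A$-side and for the column-selection map on the $B$-side, the natural branching programs have size $\Theta(N)=\Theta(2^{n/2})$, so this route yields messages of size $\Omega(2^{n})$ --- exactly the cost you are trying to beat. The actual construction in \cite{BKN18} avoids this by exploiting the tensor structure of the truth table (a multilinear-form decomposition with cancellation of cross terms), not by feeding exponential-size branching programs into a generic compiler. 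So either cite \cite{BKN18} and stop, or supply the genuinely different argument that makes the distributed simulation cheap; the present sketch does neither.
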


For specific functions, PSM protocols with better communication and randomness complexity are known \cite{Ish13,SESN23}.
These protocols have a common template.
We refer to such special protocols as \textit{additive} PSM protocols.
Let $m$ be a prime.
We say that a PSM protocol $\Sigma=(\mathsf{Gen},\mathsf{Enc},\mathsf{Dec})$ is additive over $\mathbb{Z}_m$ if there are a subgroup $U\subseteq\mathbb{Z}_m^*$ and $n$ functions $(g_i:\bin\ra\mathbb{Z}_m)_{i\in[n]}$ such that
\begin{itemize}
    \item Randomness generated by $\mathsf{Gen}$ is $\rho=(u,(r_i)_{i\in[n]})$, where $u$ is sampled according to the uniform distribution over $U$ and $(r_i)_{i\in[n]}$ are random additive shares of zero (i.e., uniformly random elements conditioned on $\sum_{i\in[n]}r_i=0$). 
    \item Each message has the form of $\mu_i=u\cdot g_i(x_i)-r_i \bmod m$.
    \item $\mathsf{Dec}$ computes an output based on $\sum_{i\in[n]}\mu_i \bmod m$.
\end{itemize}
The communication and randomness complexity is $O(n\log m)$.

\begin{proposition}[Ishai \cite{Ish13}]\label{proposition:psm-and}
There exists an additive PSM protocol over $\mathbb{Z}_m$ for $\mathsf{AND}_n$ such that $m$ is any prime larger than $n$ and $U=\mathbb{Z}_m^*$.
In particular, the communication and randomness complexity is $O(n\log n)$.
\end{proposition}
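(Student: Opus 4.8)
The plan is to exhibit the protocol explicitly, check that it fits the additive template over $\mathbb{Z}_m$, and then verify correctness, privacy, and the complexity bounds directly from the definitions. Concretely, I would fix a prime $m$ with $m>n$ --- by Bertrand's postulate one can take $m\le 2n$, so that $\log m=O(\log n)$ --- set $U=\mathbb{Z}_m^*$, and define $g_i:\bin\ra\mathbb{Z}_m$ by $g_i(b)=1-b$ for every $i\in[n]$, so that $g_i(1)=0$ and $g_i(0)=1$. Then $\mathsf{Gen}$ outputs $\rho=(u,(r_i)_{i\in[n]})$ with $u\sample\mathbb{Z}_m^*$ and $(r_i)_{i\in[n]}$ uniform additive shares of zero (sample $r_1,\ldots,r_{n-1}\sample\mathbb{Z}_m$ and put $r_n=-\sum_{i<n}r_i\bmod m$); $\mathsf{Enc}(i,x_i,\rho)=u\cdot g_i(x_i)-r_i\bmod m$; and $\mathsf{Dec}(\mu_1,\ldots,\mu_n)$ computes $S=\sum_{i\in[n]}\mu_i\bmod m$ and outputs $1$ if $S=0$ and $0$ otherwise. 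This matches the additive template with the stated $U$ and $g_i$, and since each $\mu_i$ and each component of $\rho$ lies in $\mathbb{Z}_m$, the communication and randomness complexity are both $O(n\log m)=O(n\log n)$.

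For correctness, I would note that $\sum_{i\in[n]}r_i=0$ gives $S=u\sum_{i\in[n]}g_i(x_i)=u\cdot w(\bm{x})\bmod m$, where $w(\bm{x}):=\sum_{i\in[n]}(1-x_i)$ is the number of zero coordinates of $\bm{x}$. If $\mathsf{AND}_n(\bm{x})=1$ then $\bm{x}=(1,\ldots,1)$, so $w(\bm{x})=0$, hence $S=0$ and $\mathsf{Dec}$ outputs $1$. If $\mathsf{AND}_n(\bm{x})=0$ then $1\le w(\bm{x})\le n<m$; since $m$ is prime, both $u$ and $w(\bm{x})$ are units in $\mathbb{Z}_m$, so $S=u\cdot w(\bm{x})\ne 0$ and $\mathsf{Dec}$ outputs $0$.

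For privacy, I would take $\bm{x},\bm{x}'$ with $\mathsf{AND}_n(\bm{x})=\mathsf{AND}_n(\bm{x}')$. If the common value is $1$ then $\bm{x}=\bm{x}'$ and there is nothing to prove, so assume it is $0$ and let $t=w(\bm{x})\in\{1,\ldots,n\}$. Conditioned on any fixed $u$, the tuple $(r_i)_{i\in[n]}$ is uniform on the subgroup $\{(r_i)_i:\sum_i r_i=0\}$, hence $(\mu_i)_{i\in[n]}=(u\,g_i(x_i)-r_i)_{i\in[n]}$ is uniform on the coset $\{(\mu_i)_i:\sum_i\mu_i=ut\}$. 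Marginalizing over $u\sample\mathbb{Z}_m^*$ and using that $t$ is a unit (so $u\mapsto ut$ permutes $\mathbb{Z}_m^*$), the value $ut$ is uniform on $\mathbb{Z}_m^*$; therefore the distribution of $(\mu_i)_{i\in[n]}$ is exactly: sample $v\sample\mathbb{Z}_m^*$, then sample $(\mu_i)_i$ uniformly subject to $\sum_i\mu_i=v$. This description depends on $\bm{x}$ only through the fact that $w(\bm{x})$ is a nonzero unit, which holds equally for $\bm{x}'$, so the two distributions coincide.

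The one place where the hypotheses are essential --- and the step I would be most careful about --- is the claim that $u\cdot w(\bm{x})$ is uniformly distributed over $\mathbb{Z}_m^*$ whenever $\mathsf{AND}_n(\bm{x})=0$. This requires both that every possible nonzero weight in $\{1,\ldots,n\}$ be invertible modulo $m$ (hence $m$ prime with $m>n$) and that $u$ range over all of $\mathbb{Z}_m^*$ rather than a proper subgroup; neither correctness nor privacy survives if $m\le n$. Everything else is a routine verification against the definitions of PSM and of additive PSM.
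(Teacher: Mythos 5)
Your proposal is correct: the construction (prime $m\le 2n$, $U=\mathbb{Z}_m^*$, $g_i(b)=1-b$, decoding by testing whether $\sum_i\mu_i=0\bmod m$) is exactly the additive PSM for $\mathsf{AND}_n$ that the paper imports from \cite{Ish13} without proof, and your verification of correctness, privacy (via uniformity of $u\cdot w(\bm{x})$ over $\mathbb{Z}_m^*$ for nonzero weight $w(\bm{x})\le n<m$), and the $O(n\log n)$ complexity bounds is sound. Since the paper only cites this result, your write-up is a faithful self-contained reconstruction rather than a divergent argument.
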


\begin{proposition}[Shinagawa et al. \cite{SESN23}]\label{proposition:psm-sym}
For any symmetric function $f:\bin^n\ra\bin$, there exists an additive PSM protocol over $\mathbb{Z}_m$ for $f$ such that $m=(1+o(1))n^2 2^{2n-2}$ for any sufficiently large $n$ and $U\subseteq\mathbb{Z}_m^*$ is the subgroup of all quadratic residues $a$ modulo $m$, i.e., $a\in\mathbb{Z}_m^*$ for which there exists $x\in\mathbb{Z}_m^*$ such that $x^2=a \bmod m$.
In particular, the communication and randomness complexity is $O(n^2)$.
\end{proposition}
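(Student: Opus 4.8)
The plan is to realize $f$ with the additive-PSM template in which the referee reads off $\sigma:=\sum_{i\in[n]}\mu_i=u\sum_{i\in[n]}g_i(x_i)\bmod m$ and then tests whether $\sigma$ is a quadratic residue. Since $f$ is symmetric, write $f(\bm{x})=\phi(w)$ where $w=\sum_i x_i\in\{0,\dots,n\}$ is the Hamming weight and $\phi:\{0,\dots,n\}\ra\bin$. I want $\sum_i g_i(x_i)$ to be a fixed affine function of $w$: taking $m$ prime with $m>n$, set $g_i(0)=c\,n^{-1}$ and $g_i(1)=c\,n^{-1}+1$ for a constant $c\in\mathbb{Z}_m$ to be chosen, so that $\sum_i g_i(x_i)=c+w$ for every $\bm{x}$ and hence $\sigma=u(c+w)$. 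Let $\chi$ be the Legendre character mod $m$ and $U\subseteq\mathbb{Z}_m^*$ the subgroup of quadratic residues. The construction then reduces to choosing $c$ so that $c+w\in\mathbb{Z}_m^*$ for all $w\in\{0,\dots,n\}$ and $\chi(c+w)$ depends only on $\phi(w)$ (for non-constant $f$: $\chi$ equals $+1$ on one of $\phi^{-1}(0),\phi^{-1}(1)$ and $-1$ on the other).

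Granting such a $c$, the protocol is immediate. Define $\mathsf{Dec}(\sigma)$ to output the value of $\phi$ dictated by the sign $\chi(\sigma)$. Correctness: $\sigma=u(c+w)\in\mathbb{Z}_m^*$ and $u\in U$ is a quadratic residue, so $\chi(\sigma)=\chi(c+w)$, which identifies $\phi(w)=f(\bm{x})$. Privacy: conditioned on the value $u$, the tuple $(\mu_i)_{i\in[n]}$ is uniform over the hyperplane $\{(\nu_i):\sum_i\nu_i=u(c+w)\}$ because $(r_i)_{i\in[n]}$ is uniform subject to $\sum_i r_i=0$; averaging over $u\sample U$ and using $c+w\in\mathbb{Z}_m^*$, the transcript is uniform over $\{(\nu_i):\sum_i\nu_i\in(c+w)U\}$, and the coset $(c+w)U$ is determined by $\chi(c+w)$, hence by $f(\bm{x})$. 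The randomness $\rho=(u,(r_i)_{i\in[n]})$ has bit-length $O(n\log m)$ and the communication is $n\lceil\log m\rceil$ bits, so with $m=\Theta(n^2 4^n)$ both are $O(n^2)$; this matches the asserted additive PSM with $U$ the quadratic residues.

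It remains to find $c$, which is the heart of the matter. For an arbitrary target pattern $(t_w)_{w=0}^n\in\{\pm1\}^{n+1}$ I would exhibit a $c$ with $c+w\in\mathbb{Z}_m^*$ realizing either $\chi(c+w)=t_w$ for all $w$ or $\chi(c+w)=-t_w$ for all $w$; the complement is admissible because it merely relabels $\mathsf{Dec}$. Let $N_\pm$ count the $c$'s with $\chi(c+w)=\pm t_w$ for all $w$. Using multiplicativity of $\chi$,
\begin{align*}
N_+ + N_- &= \frac{1}{2^{n+1}}\sum_{c\in\mathbb{Z}_m}\Bigl(\prod_{w=0}^{n}\bigl(1+t_w\chi(c+w)\bigr)+\prod_{w=0}^{n}\bigl(1-t_w\chi(c+w)\bigr)\Bigr)\\
&= \frac{1}{2^{n}}\sum_{\substack{T\subseteq\{0,\dots,n\}\\ |T|\ \text{even}}}\Bigl(\prod_{w\in T}t_w\Bigr)\sum_{c\in\mathbb{Z}_m}\chi\Bigl(\prod_{w\in T}(c+w)\Bigr).
\end{align*}
The term $T=\emptyset$ contributes $m/2^{n}$. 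For every nonempty even $T$ the polynomial $\prod_{w\in T}(c+w)$ has $|T|$ distinct roots (since $n<m$), hence is squarefree, so Weil's bound yields $|\sum_{c}\chi(\prod_{w\in T}(c+w))|\le(|T|-1)\sqrt m$; as $\sum_{k\ \text{even}}\binom{n+1}{k}k=(n+1)2^{n-1}$, the total error is at most $(n+1)\sqrt m/2$. After discarding the at most $n+1$ values of $c$ with some $c+w\equiv0$, the number of usable $c$'s is at least $m/2^{n}-(n+1)\sqrt m/2-(n+1)$.

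The main obstacle is to make this count positive with the sharp constant. I would take $m$ to be a prime slightly above $(n+1)^2 2^{2n-2}$ — for instance in the interval $[(1+\tfrac1n)(n+1)^2 2^{2n-2},\,(1+\tfrac2n)(n+1)^2 2^{2n-2}]$, which is nonempty by the prime number theorem and equals $(1+o(1))n^2 2^{2n-2}$. Then $m/2^{n}\approx(1+\tfrac1n)(n+1)^2 2^{n-2}$ exceeds $(n+1)\sqrt m/2\approx(1+\tfrac1{2n})(n+1)^2 2^{n-2}$ by a margin of order $n2^{n}$, which dominates the additive $n+1$ for large $n$, so a usable $c$ exists. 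The decisive structural point is that $U$ has index exactly $2$ in $\mathbb{Z}_m^*$: this is what lets $N_+ + N_-$ collapse onto the even-$|T|$ terms, doubling the main term from $m/2^{n+1}$ to $m/2^{n}$ and thereby lowering the threshold on $m$ to $(1+o(1))n^2 2^{2n-2}$; a naive single-pattern count would only give $m=\Theta(n^2 4^n)$. Careful tracking of the Weil error terms and of the degenerate values of $c$ then completes the argument.
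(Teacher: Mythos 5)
The paper offers no proof of this proposition---it is imported verbatim from \cite{SESN23}---so your argument is being checked against that reference's construction, and it is essentially the same one: encode the Hamming weight additively so the referee sees $u(c+w)$, decide via the Legendre symbol, and prove existence of a suitable offset $c$ by a character-sum/Weil-bound count, with the ``pattern or its complement'' trick collapsing the sum onto even-size subsets and buying the factor of $4$ that turns $n^2 2^{2n}$ into $(1+o(1))n^2 2^{2n-2}$. Your details check out (the identity $\sum_{k\ \mathrm{even}}\binom{n+1}{k}k=(n+1)2^{n-1}$, the squarefreeness needed for Weil, the handling of the $\le n+1$ degenerate offsets, and the existence of a prime in the stated window), so the reconstruction is correct and matches the cited approach.
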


\section{Single-Shuffle Protocols for Any Function}

First, we show a transformation from a PSM protocol for a function $f$ into a single-shuffle full-open protocol for $f$ using a composition of a pile-shifting shuffle and a complete shuffle.

\begin{figure}
\centering
\fbox{\parbox{\textwidth}{
\begin{description}
    \item[Notations.]\mbox{}
        \begin{itemize}       
            \item Let $f:\bin^n\ra\bin$ be a function.
            \item Let $\Sigma=(\mathsf{Gen},\mathsf{Enc},\mathsf{Dec})$ be a PSM protocol for $f$ with randomness complexity $r$ and communication complexity $c$. 
            \item Let $\mathrm{supp}(\mathcal{D})=\{\rho_1,\ldots,\rho_{R}\}$ be an enumeration of all possible values of randomness for $\Sigma$.
        \end{itemize}

    \item[Construction of $L_i^b$.] For each $i\in[n]$ and each $b\in\bin$,
    \begin{enumerate}
        \item Denote $\mu_i(b,\rho)=\mathsf{Enc}(i,b,\rho)\in\bin^{c_i}$ for each $\rho\in\mathrm{supp}(\mathcal{D})$.
        \item Let $\widetilde{L}_i^b=\mu_i(b,\rho_1)||\cdots||\mu_i(b,\rho_{R})\in\bin^{c_iR}$, where $||$ denotes the concatenation of strings.
        \item Let $L_i^b\in\{\heartsuit,\clubsuit\}^{\ell_i}$ be the sequence of suits determined by $\widetilde{L}_i^b$ based on the encoding rule $\clubsuit\heartsuit= 0$ and $\heartsuit\clubsuit= 1$, where $\ell_i=2c_iR$.
    \end{enumerate}

    \item[Shuffling.] Given a sequence of $\ell=\sum_{i\in[n]}\ell_i$ face-down cards
    \begin{align*}
    \overbrace{\underbrace{\back\cdots\back}_{\mu_1(x_1,\rho_1)}\cdots\underbrace{\back\cdots\back}_{\mu_1(x_1,\rho_{R})}}^{L_1^{x_1}}\cdots\overbrace{\underbrace{\back\cdots\back}_{\mu_n(x_n,\rho_1)}\cdots\underbrace{\back\cdots\back}_{\mu_n(x_n,\rho_{R})}}^{L_n^{x_n}}\,,
    \end{align*}
    \begin{enumerate}
        \item Apply a pile-shifting shuffle as follows:
        \begin{align*}
            &\Big\langle
            \underbrace{\back\cdots\back}_{\mu_1(x_1,\rho_1)}\cdots\underbrace{\back\cdots\back}_{\mu_n(x_n,\rho_{1})}
            \Big|\cdots\Big|
            \underbrace{\back\cdots\back}_{\mu_1(x_1,\rho_{R})}\cdots\underbrace{\back\cdots\back}_{\mu_n(x_n,\rho_{R})}
            \Big\rangle\\
            \ra~&
            \underbrace{\back\cdots\back}_{\mu_1(x_1,\rho_s)}\cdots\underbrace{\back\cdots\back}_{\mu_n(x_n,\rho_{s})}            \underbrace{\back\cdots\back}_{\mu_1(x_1,\rho_{s+1})}\cdots\underbrace{\back\cdots\back}_{\mu_n(x_n,\rho_{s+1})}
            \cdots
            \underbrace{\back\cdots\back}_{\mu_1(x_1,\rho_{s-1})}\cdots\underbrace{\back\cdots\back}_{\mu_n(x_n,\rho_{s-1})},
        \end{align*}
        where $s\in[R]$.
        \item Apply a complete shuffle to the sequence of cards, except for those corresponding to $(\mu_1(x_1,\rho_s),\ldots,\mu_n(x_n,\rho_{s}))$. That is,
        \begin{align*}
            &\underbrace{\back\cdots\back}_{\mu_1(x_1,\rho_s)}\cdots\underbrace{\back\cdots\back}_{\mu_n(x_n,\rho_{s})}\,
            \Big[\underbrace{\back\cdots\back}_{\mu_1(x_1,\rho_{s+1})}\cdots\underbrace{\back\cdots\back}_{\mu_n(x_n,\rho_{s+1})}
            \cdots
            \underbrace{\back\cdots\back}_{\mu_1(x_1,\rho_{s-1})}\cdots\underbrace{\back\cdots\back}_{\mu_n(x_n,\rho_{s-1})}\Big]\\
            \ra~&
            \underbrace{\back\cdots\back}_{\mu_1(x_1,\rho_s)}\cdots\underbrace{\back\cdots\back}_{\mu_n(x_n,\rho_{s})}\,\back\cdots\cdots\back\,.
        \end{align*}
    \end{enumerate}

    \item[Output.]\mbox{}
    \begin{enumerate}
        \item Reveal all face-down cards.
        \item Run $\mathsf{Dec}$ on input $(\mu_1(x_1,\rho_s),\ldots,\mu_n(x_n,\rho_{s}))$ to obtain $y\in\bin$.
        \item Output $y$.
    \end{enumerate}
\end{description}}}
\caption{A single-shuffle full-open protocol based on a PSM protocol}
\label{figure:full-open-1}
\end{figure}

\begin{theorem}\label{theorem:full-open-1}
    Let $\Sigma=(\mathsf{Gen},\mathsf{Enc},\mathsf{Dec})$ be a PSM protocol for a function $f:\bin^n\ra\bin$ with randomness complexity $r$ and communication complexity $c$.
    Then, there exists a single-shuffle full-open protocol $\Pi$ for $f$ such that
    \begin{itemize}
        \item The number of cards is at most $c2^{r+1}$.
        \item The shuffle is a composition of a pile-shifting shuffle and a complete shuffle.
    \end{itemize}
\end{theorem}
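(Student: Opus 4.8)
The plan is to verify that the protocol $\Pi$ described in Figure~\ref{figure:full-open-1} meets the requirements of Definition~\ref{definition:ssfo} with the claimed parameters. The bound on the number of cards and the shape of the shuffle are immediate from the construction. Each message $\mu_i(b,\rho)\in\bin^{c_i}$ is encoded bitwise under the rule $\clubsuit\heartsuit=0$, $\heartsuit\clubsuit=1$, so each $L_i^b$ uses $2c_i R$ cards, where $R=|\mathrm{supp}(\mathcal{D})|\le 2^r$; summing over $i$ gives $\ell=2cR\le c2^{r+1}$. Moreover every encoded bit contributes exactly one $\heartsuit$ and one $\clubsuit$, so $(L_1^{x_1},\ldots,L_n^{x_n})$ always contains $cR$ cards of each suit, and the number of cards of $\Pi$ equals $\ell$. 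The shuffle is, by construction, a pile-shifting shuffle applied to the $R$ piles of $2c$ cards, followed by a complete shuffle applied to the fixed block of $2c(R-1)$ positions consisting of all positions except the first $2c$ (which after the pile-shift always hold the selected pile). Hence it is a composition of a pile-shifting shuffle and a complete shuffle, and by the convention that a composition of shuffles is a single shuffle, this is admissible.

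For correctness I would argue as follows. The pile-shifting shuffle cyclically rotates the piles, placing some pile $M(\rho_s)=(\mathsf{Enc}(i,x_i,\rho_s))_{i\in[n]}$, where $s\in[R]$ is determined by the random shift, into the first $2c$ positions in its original left-to-right order; the subsequent complete shuffle leaves these cards untouched. Thus after all cards are revealed, reading off the first $2c$ suits and decoding them under the encoding rule recovers exactly $(\mathsf{Enc}(i,x_i,\rho_s))_{i\in[n]}$. Since $\rho_s\in\mathrm{supp}(\mathcal{D})$ and the correctness of $\Sigma$ holds for every value of randomness in the support, $\mathsf{Dec}((\mathsf{Enc}(i,x_i,\rho_s))_{i\in[n]})=f(\bm x)$ with probability one. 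Defining $\mathsf{Output}$ to perform precisely this read-off-and-decode on the first $2c$ revealed suits establishes correctness.

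The substance of the proof is privacy. I would show that the distribution of the final state $\mathsf{st}_{\bm x}$ factors into two independent parts. The first $2c$ revealed suits are the card-encoding of $(\mathsf{Enc}(i,x_i,\rho_s))_{i\in[n]}$; here I use the default assumption that $\mathcal{D}$ is uniform, so that a uniform shift in the pile-shifting shuffle makes $\rho_s$ distributed as $\rho\la\mathsf{Gen}()$. The remaining $2c(R-1)$ suits, after the complete shuffle, form a uniformly random arrangement of the multiset of cards coming from the $R-1$ non-selected piles. The key observation is that this multiset is input-independent: each non-selected pile encodes a bit string of length $c$, and the bitwise encoding turns every bit into one $\heartsuit$ and one $\clubsuit$, so the multiset always consists of exactly $c(R-1)$ $\heartsuit$'s and $c(R-1)$ $\clubsuit$'s, regardless of $\bm x$ and of $s$. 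Consequently the conditional law of the last block given $s$ is a fixed distribution not depending on $s$, so the last block is independent of the first and has an input-independent distribution. Since the bitwise encoding map is injective, the distribution of the first block is the pushforward of the distribution of the PSM transcript $(\mathsf{Enc}(i,x_i,\rho))_{i\in[n]}$ for input $\bm x$; by the privacy of $\Sigma$ this is identical for any $\bm x'$ with $f(\bm x')=f(\bm x)$. Combining the two blocks yields $\mathsf{st}_{\bm x}\sim\mathsf{st}_{\bm x'}$ whenever $f(\bm x)=f(\bm x')$.

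The main point to get right is this factorization in the privacy argument, namely that conditioned on the pile-shift outcome the complete-shuffle block is a fixed-multiset uniform arrangement whose law is independent of the outcome, so the two blocks are genuinely independent and the shift part can be analyzed purely through the PSM transcript. Everything else is routine bookkeeping; a degenerate case such as $r=0$ (a single pile and a trivial complete shuffle) is covered by the same argument.
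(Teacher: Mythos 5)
Your proposal is correct and follows essentially the same route as the paper's proof: correctness via the PSM decoder applied to the selected pile $M(\rho_s)$, privacy by combining the uniformity of $\rho_s$ with PSM privacy for the first block and the complete shuffle for the rest, and the card count $2cR\le c2^{r+1}$. The only difference is that you make explicit the observation the paper leaves implicit — that the non-selected piles always form a fixed multiset of $c(R-1)$ $\heartsuit$'s and $c(R-1)$ $\clubsuit$'s under the $\clubsuit\heartsuit=0$, $\heartsuit\clubsuit=1$ encoding, so the complete-shuffle block is genuinely independent of the shift and of the input — which is a welcome bit of added rigor but not a different argument.
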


\begin{proof}
Let $\Pi$ be a card-based protocol described in Figure~\ref{figure:full-open-1}.
The correctness of $\Pi$ follows straightforwardly from that of the PSM protocol $\Sigma$ since 
\begin{align*}
    \mathsf{Dec}((\mu_i(x_i,\rho_s))_{i\in[n]})=\mathsf{Dec}((\mathsf{Enc}(i,x_i,\rho_s))_{i\in[n]})=f((x_i)_{i\in[n]}).
\end{align*}
To see privacy, let $\bm{x}=(x_i)_{i\in[n]}$ and $\bm{x}'=(x_i')_{i\in[n]}$ be inputs with $f(\bm{x})=f(\bm{x}')$.
Since $\rho_s$ at Step~1 of the shuffling phase is uniformly sampled from $\{\rho_1,\ldots,\rho_{R}\}$, the privacy of $\Sigma$ ensures that the distribution of $(\mu_i(x_i,\rho_s))_{i\in[n]}$ is identical to that of $(\mu_i(x_i',\rho_s))_{i\in[n]}$.
Furthermore, a complete shuffle at Step~2 ensures that the suits except for those corresponding to $(\mu_i(x_i,\rho_s))_{i\in[n]}$ or $(\mu_i(x_i',\rho_s))_{i\in[n]}$ are equally distributed, independent of $\bm{x}$ or $\bm{x}'$.
Therefore, we conclude that the distribution of suits revealed in the output phase on input $\bm{x}$ is identical to that on input $\bm{x}'$.

Since we use the encoding rule $\heartsuit\clubsuit=1$ and $\clubsuit\heartsuit=0$, the number of cards to encode $(\mu_i(x_i,\rho))_{i\in[n]}$ is $2c$.
Thus, the total number of cards is $2cR\leq c2^{r+1}$.
\end{proof}

It is possible to extend the above construction to the setting where $\mathcal{D}$ is an arbitrary distribution over $\bin^r$.
At Step~1 in the shuffling phase, we sample a non-uniform random shift $s\in[R]$ according to $\mathcal{D}$, that is, $s$ is sampled with probability $\probsub{\mathcal{D}}{\rho_s}$.

Next, we show variants of the construction in Theorem~\ref{theorem:full-open-1}, which use simpler shuffles at the price of giving up full-opening.
The first variant modifies the construction in such a way that only the pile of cards corresponding to $p_s:=(\mu_i(x_i,\rho_s))_{i\in[n]}$ are revealed at Step~1 in the output phase.
Then, it is no longer necessary to apply a complete shuffle to the remaining cards at Step~2 of the shuffling phase.
Since the positions of cards to be revealed are fixed, this leads to a single-shuffle static-open protocol that only requires a pile-shifting shuffle.

\begin{theorem}\label{theorem:full-open-pile-shift}
    Let $\Sigma$ be a PSM protocol for a function $f:\bin^n\ra\bin$ with randomness complexity $r$ and communication complexity $c$.
    Then, there exists a single-shuffle static-open protocol $\Pi$ for $f$ such that
    \begin{itemize}
        \item The number of cards is $c2^{r+1}$.
        \item The shuffle is a pile-shifting shuffle.
    \end{itemize}
\end{theorem}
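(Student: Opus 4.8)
The plan is to reuse the protocol of Figure~\ref{figure:full-open-1} almost verbatim, simply deleting the complete-shuffle step. Concretely, I keep the same input sequences $(L_i^b)_{i\in[n],b\in\bin}$ built from the PSM encodings via the rule $\heartsuit\clubsuit=1$, $\clubsuit\heartsuit=0$, and the same pile-shifting shuffle applied to the $R$ piles, the $j$-th of which encodes $(\mu_1(x_1,\rho_j),\ldots,\mu_n(x_n,\rho_j))$. After this shuffle the sequence is a cyclic rotation that brings the pile for a uniformly random $\rho_s$ to the front, so I set the revealed index set to be exactly the first $2c$ positions --- the pile encoding $(\mu_i(x_i,\rho_s))_{i\in[n]}$ --- and skip the complete shuffle. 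This index set is fixed in advance and does not depend on any revealed suit, so the protocol is static-open, and the only shuffle used is the single pile-shifting shuffle. (The protocol genuinely qualifies as static-open rather than full-open precisely when $R>1$, i.e., when $\Sigma$ uses at least one random bit, which we may assume without loss of generality.)

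Correctness is identical to that of Theorem~\ref{theorem:full-open-1}: the revealed first pile decodes to $(\mathsf{Enc}(i,x_i,\rho_s))_{i\in[n]}$, and running $\mathsf{Dec}$ on it returns $f(\bm{x})$ always, by the correctness of $\Sigma$; the output function $\mathsf{Output}:\{\heartsuit,\clubsuit\}^{2c}\ra\bin$ first recovers $(\mu_i)_{i\in[n]}$ from the suit sequence and then applies $\mathsf{Dec}$. For privacy, the one thing to verify --- and essentially the only nontrivial point --- is that removing the complete shuffle does no harm because the cards it used to scramble are now never revealed at all. Fix $\bm{x},\bm{x}'$ with $f(\bm{x})=f(\bm{x}')$. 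Since $s$ is uniform over $[R]$, the string $\rho_s$ is distributed exactly as $\rho\la\mathsf{Gen}()$, so the privacy of $\Sigma$ gives that $(\mathsf{Enc}(i,x_i,\rho_s))_{i\in[n]}$ and $(\mathsf{Enc}(i,x_i',\rho_s))_{i\in[n]}$ are identically distributed. The final state is a deterministic (encoding) image of this tuple, hence the final-state distributions on $\bm{x}$ and $\bm{x}'$ coincide.

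Finally, the card count is unchanged: there are $R$ piles of $2c$ cards each, for $2cR$ in total, which equals $c2^{r+1}$ when $\mathcal{D}$ is uniform over $\bin^r$ and is bounded by $c2^{r+1}$ in general (the non-uniform case is handled by sampling the shift $s$ according to $\mathcal{D}$, exactly as in the remark after Theorem~\ref{theorem:full-open-1}). Combining these observations yields the statement. I do not anticipate a genuine obstacle; the proof is a direct simplification of Theorem~\ref{theorem:full-open-1}, and the only place demanding care is the privacy argument's reliance on the fact that the ``scramble'' cards are withheld --- which is exactly what makes dropping the complete shuffle legitimate.
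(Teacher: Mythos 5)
Your proposal is correct and matches the paper's own argument: the paper likewise obtains Theorem~\ref{theorem:full-open-pile-shift} by taking the protocol of Figure~\ref{figure:full-open-1}, fixing the revealed index set to the pile encoding $(\mu_i(x_i,\rho_s))_{i\in[n]}$, and dropping the now-unnecessary complete shuffle, with privacy following from the PSM privacy of the single revealed pile. Your write-up just supplies the details the paper leaves implicit.
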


As for the second variant, we observe that at Step 1 of the shuffling phase in Figure~\ref{figure:full-open-1}, a pile-shifting shuffle is applied to the sequence to select a pile $p_s$ uniformly at random from the sequence of piles $(p_1, \ldots, p_{R})$.
We refer to this functionality as \emph{uniform selection}.
With some additional cards, uniform selection can be implemented using a random cut only.
This result is formalized in the following lemma.

\begin{lemma}\label{lemma:uniform_selection}
Let $(p_1, p_2, \ldots, p_k)$ be a sequence of $k$ piles of cards, each consisting of the same number of cards. 
Suppose that $p_i$ does not contain $\ell$ consecutive $\heartsuit$s as a subsequence for any $i\in[k]$. 
Then, there exists an adaptive-open protocol that implements a uniform selection from $(p_1, p_2, \ldots, p_k)$ by using a random cut and $(\ell+1)k$ additional cards. 
\end{lemma}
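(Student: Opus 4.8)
The plan is to reduce uniform selection to a single random cut by padding each pile with a distinctive ``separator'' marker so that, once all the padded piles are arranged cyclically and a random cut is applied, one can adaptively locate exactly one separator and read off the pile immediately following it. First I would define a separator block $S$ consisting of $\ell$ consecutive $\heartsuit$s followed by a single $\clubsuit$ (so $\ell+1$ cards). The key point is that, by hypothesis, no $p_i$ contains $\ell$ consecutive $\heartsuit$s, so inside the concatenation $S\,\|\,p_1\,\|\,S\,\|\,p_2\,\|\,\cdots\,\|\,S\,\|\,p_k$ the only place where a run of $\ell$ consecutive $\heartsuit$s can occur is within a separator block. The terminating $\clubsuit$ guarantees that we can unambiguously identify the \emph{end} of a separator even after a cyclic shift, and hence the start of the pile that follows it. This uses exactly $(\ell+1)k$ additional cards.

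The protocol then proceeds as follows. Form the cyclic sequence $S\,\|\,p_1\,\|\,S\,\|\,p_2\,\|\,\cdots\,\|\,S\,\|\,p_k$ of face-down cards and apply a single random cut $\langle\cdot\rangle$ to the whole sequence; let $t$ be the (uniformly random, unknown) shift. Now open cards one at a time, scanning for the pattern ``$\ell$ consecutive $\heartsuit$s followed by a $\clubsuit$''; adaptively, the next card to reveal is always the one immediately clockwise of the last revealed card, except that once we have seen $\ell$ consecutive $\heartsuit$s and then a $\clubsuit$ we stop scanning. Because the separators are the only source of $\ell$-runs of $\heartsuit$s and the sequence is cyclic, this scan deterministically halts at the $\clubsuit$ that ends some separator block $S$; the pile whose cards come right after that $\clubsuit$ is the selected pile $p_{i^*}$. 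We reveal (or hand off, depending on how the outer protocol consumes the output) exactly that pile. Since the $k$ separators partition the cyclic order into $k$ equal-length arcs and the random cut shifts by a uniform amount, the index $i^*$ of the selected pile is uniform over $[k]$ and independent of everything else; this gives the uniform-selection guarantee.

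For correctness I would argue that the scanning procedure is well-defined and terminates: starting from any position, moving clockwise we encounter the separators periodically, and within at most $|S| + \max_i |p_i|$ steps we hit a complete separator block, at which point the ``$\ell$ $\heartsuit$s then $\clubsuit$'' trigger fires; it cannot fire spuriously inside a $p_i$ or across a $p_i$–$S$ boundary precisely because $p_i$ has no $\ell$-run of $\heartsuit$s, so any maximal $\heartsuit$-run reaching length $\ell$ must lie entirely inside a separator and the card after it is the separator's trailing $\clubsuit$. For privacy I would note that the only cards whose suits are revealed are the separator cards scanned (whose pattern is fixed and input-independent) together with the cards of $p_{i^*}$; the distribution of which pile becomes $p_{i^*}$ is uniform and independent of the pile contents, so the revealed suits carry no information beyond those of a uniformly chosen pile, exactly as required of uniform selection.

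The main obstacle I anticipate is the unambiguity of the separator detection under a cyclic shift. One has to be careful that a run of $\heartsuit$s straddling the wrap-around point, or the juxtaposition of trailing $\heartsuit$s of some $p_i$ with the leading $\heartsuit$s of the next $S$, does not create a false length-$\ell$ run that is read as a separator with the wrong following pile. The condition that each $p_i$ avoids $\ell$ consecutive $\heartsuit$s \emph{as a subsequence of consecutive cards} is exactly what rules this out, but the proof needs to spell out the boundary cases: a maximal $\heartsuit$-run of length $\geq \ell$ either is contained in one separator block, or would have to include at least $\ell$ cards of some $p_i$ (impossible), so the trailing $\clubsuit$ we detect always belongs to a genuine separator and the pile after it is a genuine, complete $p_i$. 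Handling the $\heartsuit$-run that could in principle span two adjacent separators (if some $p_i$ were empty) is avoided since each $p_i$ is a nonempty pile; if one wanted to be fully robust one could instead use a separator of the form $\heartsuit^\ell\clubsuit$ read strictly left-to-right, which already forces the detected run to end at its own $\clubsuit$.
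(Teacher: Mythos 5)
Your construction is essentially the paper's: the same delimiter gadget $\heartsuit^{\ell}\clubsuit$, the same card count $(\ell+1)k$, a single random cut, and an adaptive left-to-right scan for the first complete delimiter. However, you flip the orientation --- you place the separator \emph{before} each pile and select the pile \emph{after} the detected separator, whereas the paper places the delimiter \emph{after} each pile and outputs the pile immediately to the \emph{left} of the first complete delimiter found. This is not a cosmetic difference, and it creates a genuine gap in your privacy claim.

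Concretely, a random cut shifts by a uniformly random number of \emph{card} positions, so with probability $1-1/L$ (where $L$ is the length of one separator-plus-pile block) the cut lands strictly inside a block $S_j\,\|\,p_j$. Your scan then reveals the remainder of that block --- including some or all of the cards of $p_j$ --- before it reaches the first complete separator $S_{j+1}$ and selects $p_{j+1}$. Thus the scan exposes the contents of a pile that is \emph{not} the selected one, contradicting your assertion that ``the only cards whose suits are revealed are the separator cards scanned together with the cards of $p_{i^*}$.'' In the intended application (Theorem~\ref{theorem:full-open-random-cut}), each pile encodes the PSM messages under a distinct randomness value, so revealing a second pile breaks privacy. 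The paper's orientation avoids this: scanning left to right, the pile cards encountered before the first complete delimiter always form a suffix of the pile sitting immediately to the left of that delimiter, which is exactly the pile being selected. The fix to your argument is simply to adopt that convention --- output the pile \emph{ending at} the detected delimiter rather than the one following it; your uniformity argument and the boundary-case analysis for runs of $\heartsuit$s straddling a pile--separator boundary then go through unchanged.
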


\begin{proof}
The protocol additionally uses $\ell k$ $\red\,$s and $k$ $\blk\,$s.
Using these $(\ell+1)k$ additional binary cards, we first arrange the sequence as follows:
\[
\underbrace{\back\,\cdots\,\back}_{p_1}~\underbrace{\red\,\cdots\,\red}_{\ell~\text{cards}}\,\blk~
\underbrace{\back\,\cdots\,\back}_{p_2}~\underbrace{\red\,\cdots\,\red}_{\ell~\text{cards}}\,\blk~
\cdots~
\underbrace{\back\,\cdots\,\back}_{p_k}~\underbrace{\red\,\cdots\,\red}_{\ell~\text{cards}}\,\blk\,.
\]
We refer to each sequence of $\ell$ $\red\,$s followed by one $\blk\,$ as a \emph{delimiter}.
Next, we turn all delimiters face-down and apply a random cut to the entire sequence.
Then, we reveal the cards from left to right until we encounter a delimiter of the form $\red\,\cdots\,\red\,\blk\,$.
Once a delimiter is found, we output the pile immediately to its left. (Note that each pile has a fixed number of cards.)
Since each $p_i$ does not contain $\ell$ consecutive $\heartsuit$s, we can always find such a delimiter. 
Thus, the above procedure implements a uniform selection. 
\end{proof}

Since the encoding rule $\clubsuit\heartsuit= 0,\heartsuit\clubsuit= 1$ ensures that each pile $p_s$ does not contain $\ell=3$ consecutive $\heartsuit$s, uniform selection from $(p_1,\ldots,p_{R})$ can be implemented with $4R\leq 4 \cdot 2^{r} = 2^{r+2}$ additional cards.
We thus obtain a single-shuffle adaptive-open protocol using a random cut and at most $2^{r+2}$ additional cards.

\begin{theorem}\label{theorem:full-open-random-cut}
    Let $\Sigma$ be a PSM protocol for a function $f:\bin^n\ra\bin$ with randomness complexity $r$ and communication complexity $c$.
    Then, there exists a single-shuffle adaptive-open protocol $\Pi$ for $f$ such that
    \begin{itemize}
        \item The number of cards is $c2^{r+1} + 2^{r+2}$.
        \item The shuffle is a random cut.
    \end{itemize}
\end{theorem}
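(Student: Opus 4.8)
The plan is to combine Theorem~\ref{theorem:full-open-pile-shift} with Lemma~\ref{lemma:uniform_selection}, replacing the pile-shifting shuffle by a random cut at the cost of a bounded number of additional cards. Recall that the static-open protocol of Theorem~\ref{theorem:full-open-pile-shift} arranges the input cards into the $R$ piles $p_1,\ldots,p_R$ with $p_j$ encoding the collection of PSM messages $(\mu_i(x_i,\rho_j))_{i\in[n]}$, applies a pile-shifting shuffle to this sequence of piles, and then reveals only the first pile. The key observation is that the pile-shifting shuffle followed by revealing the first pile is exactly an instance of \emph{uniform selection} from $(p_1,\ldots,p_R)$ in the sense of Lemma~\ref{lemma:uniform_selection}: it reveals the contents of a pile $p_s$ for a uniformly random $s\in[R]$, while keeping the other piles face down (indeed irrelevant).

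First I would check the hypothesis of Lemma~\ref{lemma:uniform_selection}. Each pile $p_j$ is a sequence of suits obtained from the bit string $\widetilde{L}_1^{x_1}\|\cdots$ restricted to the $\rho_j$-block, under the encoding rule $\clubsuit\heartsuit=0$, $\heartsuit\clubsuit=1$. Under this rule, every pair of consecutive positions of the encoded string contains exactly one $\heartsuit$ and one $\clubsuit$; consequently a block of at least three consecutive cards always contains a $\clubsuit$, so $p_j$ never contains $\ell=3$ consecutive $\heartsuit$s. Hence Lemma~\ref{lemma:uniform_selection} applies with $\ell=3$, and uniform selection from $(p_1,\ldots,p_R)$ can be implemented with $(\ell+1)k = 4R$ additional cards using a single random cut.

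Next I would assemble the protocol: each party places the same input cards as in Theorem~\ref{theorem:full-open-pile-shift}, we insert the $4R$ delimiter cards between the piles as in the proof of Lemma~\ref{lemma:uniform_selection}, turn the delimiters face down, apply one random cut to the whole sequence, and then adaptively reveal cards from left to right until a delimiter $\heartsuit\heartsuit\heartsuit\clubsuit$ is found, outputting (the decoding of) the pile immediately to its left. Correctness follows because the revealed pile is $p_s$ for a uniformly random $s$, and running $\mathsf{Dec}$ on the messages encoded in $p_s$ yields $f(\bm x)$ by correctness of $\Sigma$; note the remaining piles stay face down, so nothing beyond $p_s$ and the (input-independent) delimiter pattern is revealed. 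Privacy follows because $\rho_s$ is uniform and independent of $\bm x$, so by privacy of $\Sigma$ the distribution of the revealed suits of $p_s$ depends only on $f(\bm x)$; the positions and suits of the revealed delimiter cards depend only on the random cut, not on $\bm x$. For the card count, the base protocol uses $c2^{r+1}$ cards (Theorem~\ref{theorem:full-open-pile-shift}), and the $4R \le 4\cdot 2^r = 2^{r+2}$ delimiter cards are the only additions, giving $c2^{r+1}+2^{r+2}$ as claimed; the shuffle is a single random cut.

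I do not expect a genuine obstacle here — the statement is essentially a corollary obtained by plugging Lemma~\ref{lemma:uniform_selection} into Theorem~\ref{theorem:full-open-pile-shift}. The one point requiring a little care is verifying the ``no three consecutive $\heartsuit$s'' condition for the piles under the chosen encoding, and making sure the adaptive-open formalism of Definition~\ref{definition:ssfo} cleanly accommodates the ``reveal until you hit a delimiter'' rule; both are routine. A minor subtlety is that the random cut acts cyclically on the entire sequence (piles plus delimiters), so one should note that wrapping around does not break the ``scan until the next delimiter'' procedure, since there is always at least one delimiter and each pile has a fixed known length.
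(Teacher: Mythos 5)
Your proposal is correct and follows essentially the same route as the paper: the theorem is obtained by plugging Lemma~\ref{lemma:uniform_selection} (with $\ell=3$, justified by the $\clubsuit\heartsuit=0$, $\heartsuit\clubsuit=1$ encoding ruling out three consecutive $\heartsuit$s) into the static-open construction of Theorem~\ref{theorem:full-open-pile-shift}, yielding $4R\leq 2^{r+2}$ additional delimiter cards and a single random cut. Your remarks on the cyclic wrap-around and the adaptive-open formalism are sensible refinements of details the paper leaves implicit.
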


We apply the above theorems to the PSM protocol in Proposition~\ref{proposition:psm-all} and obtain the following corollary.

\begin{corollary}\label{corollary:SSFO-all}
For any function $f:\bin^n\ra\bin$, there exist three single-shuffle protocols $\Pi_1$, $\Pi_2$, and $\Pi_3$ for $f$ such that
\begin{itemize}
    \item $\Pi_1$ is full-open and uses a composition of a pile-shifting shuffle and a complete shuffle;
    \item $\Pi_2$ is static-open and uses a pile-shifting shuffle;
    \item $\Pi_3$ is adaptive-open and uses a random cut;
\end{itemize}
and all of them require $2^{O(n^3 2^{n/2})}$ cards.
\end{corollary}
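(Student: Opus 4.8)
The plan is to simply instantiate each of the three transformations—Theorem~\ref{theorem:full-open-1}, Theorem~\ref{theorem:full-open-pile-shift}, and Theorem~\ref{theorem:full-open-random-cut}—with the universal PSM protocol guaranteed by Proposition~\ref{proposition:psm-all}. First I would fix, for the given $f:\bin^n\ra\bin$, the PSM protocol $\Sigma$ from Proposition~\ref{proposition:psm-all}, which has both communication complexity $c=O(n^3 2^{n/2})$ and randomness complexity $r=O(n^3 2^{n/2})$. Feeding $\Sigma$ into Theorem~\ref{theorem:full-open-1} yields $\Pi_1$, a single-shuffle full-open protocol using a composition of a pile-shifting shuffle and a complete shuffle; feeding it into Theorem~\ref{theorem:full-open-pile-shift} yields $\Pi_2$, static-open using only a pile-shifting shuffle; and feeding it into Theorem~\ref{theorem:full-open-random-cut} yields $\Pi_3$, adaptive-open using only a random cut. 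The correctness, privacy, opening type, and shuffle type for each $\Pi_i$ are exactly what the corresponding theorem asserts, so nothing further needs to be checked for those claims.

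The only quantitative point to verify is the card count. For $\Pi_1$ and $\Pi_2$ the number of cards is $c2^{r+1}$, and for $\Pi_3$ it is $c2^{r+1}+2^{r+2}$. Substituting $c = O(n^3 2^{n/2})$ and $r = O(n^3 2^{n/2})$, the dominant term is $2^{r+1} = 2^{O(n^3 2^{n/2})}$, and multiplying by the polynomially-in-the-exponent factor $c$ (or adding the $2^{r+2}$ term) does not change the order: $c \cdot 2^{r+1} = 2^{O(n^3 2^{n/2})} \cdot 2^{O(n^3 2^{n/2})} = 2^{O(n^3 2^{n/2})}$. Hence all three protocols use $2^{O(n^3 2^{n/2})}$ cards, as claimed.

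There is essentially no obstacle here; the work has already been done in Theorems~\ref{theorem:full-open-1}, \ref{theorem:full-open-pile-shift}, and \ref{theorem:full-open-random-cut} together with Proposition~\ref{proposition:psm-all}. The one thing to be mildly careful about is the asymptotic bookkeeping: since $c$ itself is only polynomial in $2^{n/2}$ (not merely polynomial in $n$), one should note explicitly that $c = 2^{o(n^3 2^{n/2})}$ so that the product $c\cdot 2^{r+1}$ still absorbs into $2^{O(n^3 2^{n/2})}$ rather than inflating the exponent. That is a one-line observation, and the corollary follows immediately.
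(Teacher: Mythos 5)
Your proposal is correct and matches the paper's argument exactly: the corollary is obtained by instantiating Theorems~\ref{theorem:full-open-1}, \ref{theorem:full-open-pile-shift}, and \ref{theorem:full-open-random-cut} with the PSM protocol of Proposition~\ref{proposition:psm-all}, and the card counts $c2^{r+1}$ and $c2^{r+1}+2^{r+2}$ with $c,r=O(n^3 2^{n/2})$ collapse to $2^{O(n^3 2^{n/2})}$ as you note.
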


\section{Single-Shuffle Full-Open Protocols with Fewer Cards}

We show a different single-shuffle full-open protocol reducing the number of cards from $2^{O(n^3 2^{n/2})}$ to $O(n^22^n)$ at the cost of using a composition of $O(n2^n)$ shuffles.

\subsection{A Construction Based on Additive PSM}

First, we present a transformation from additive PSM protocols to single-shuffle full-open protocols.
Recall that in an additive PSM protocol over $\mathbb{Z}_m$, each message has the form of $uy_i-r_i \bmod m$, where $u$ is sampled uniformly at random from a subgroup $U\subseteq\mathbb{Z}_m^*$, $y_i=g_i(x_i)$ is determined by an input $x_i$, and $(r_i)_{i\in[n]}$ are random additive shares of zero.

The following proposition shows a card-based protocol for multiplying each $y_i$ by a common $u$.

\begin{figure}
\centering
\fbox{\parbox{\textwidth}{
\begin{description}
    \item[Notations.]\mbox{}
        \begin{itemize}       
            \item Let $m$ be a prime.
            \item Let $U = \{\alpha_0,\ldots,\alpha_{k-1}\}$ be a subgroup of $\mathbb{Z}_m^*$. Since $U$ is necessarily a cyclic group, we may assume that $\alpha_0$ is a generator and $\alpha_\ell=\alpha_0^{\ell+1}$ for any $\ell\in\{0,1,\ldots,k-1\}$.
            \item Let $\beta_1,\ldots,\beta_t\in\mathbb{Z}_m^*$ be coset representatives with respect to $U$, i.e., $\mathbb{Z}_m^*=\bigcup_{s\in[t]}\beta_s U$, where $t=(m-1)/k$.
        \end{itemize}

    \item[Input.] $mn$ face-down cards with suits $(\henc_m(y_1),\ldots,\henc_m(y_n))$, where $y_i\in\mathbb{Z}_m$ and $\henc_m(y_i)=(c_{i,0},\ldots,c_{i,m-1})\in\{\heartsuit,\clubsuit\}^m$.
    
    \item[Output.] $mn$ face-down cards with suits $(\henc_m(uy_1),\ldots,\henc_m(uy_n))$, where $u\sample U$.

    \item[Procedures.]\mbox{}
    \begin{enumerate}
        \item Rearrange the input sequence and obtain $m$ piles $(p(0),\ldots,p({m-1}))$, where
        \begin{align*}
            p(\gamma)=\underset{c_{1,\gamma}}{\back}\cdots \underset{c_{n,\gamma}}{\back}.
        \end{align*}
        \item Rearrange the sequence of $(p(\gamma))_{\gamma\in\mathbb{Z}_m^*}$ and obtain $k$ piles $(q(0),\ldots,q({k-1}))$, where
        \begin{align*}
            q(\ell)=\underbrace{{\back}\,\cdots\,{\back}}_{p({\beta_1\alpha_\ell})}\cdots\underbrace{{\back}\,\cdots\,{\back}}_{p({\beta_t\alpha_\ell})}.
        \end{align*}
        \item Apply a pile-shifting shuffle as follows:
        \begin{align*}
            \Big\langle\underbrace{\back\cdots\back}_{q(0)}
            \Big|\cdots\Big|
            \underbrace{\back\cdots\back}_{q(k-1)}\Big\rangle
            ~\ra~
            \underbrace{\back\cdots\back}_{q(\Delta)}
            \cdots
            \underbrace{\back\cdots\back}_{q(k-1+\Delta)}\,,
        \end{align*}
        where $\Delta\sample\mathbb{Z}_k$ and indices are taken modulo $k$.
        \item Set $q'(\ell):=q(\ell+\Delta)$ for each $\ell\in\mathbb{Z}_k$ and parse it as
        \begin{align*}
            q'(\ell)=\underbrace{{\back}\,\cdots\,{\back}}_{p({\gamma_{1,\ell}})}\cdots\underbrace{{\back}\,\cdots\,{\back}}_{p({\gamma_{t,\ell}})}.
        \end{align*}
        \item Rearrange the sequence of $p(0)$ and $(q'(\ell))_{\ell\in\mathbb{Z}_k}$ and obtain $m$ piles $(p'(\gamma))_{\gamma\in\mathbb{Z}_m}$ as follows: Let $p'(0)=p(0)$. For each $\gamma\in\mathbb{Z}_m^*$, express $\gamma=\beta_s\alpha_\ell$ for some $s\in[t]$ and $\ell\in\mathbb{Z}_k$. Set $p'(\gamma)=p(\gamma_{s,\ell})$, which is the $s$-th pile of $q'(\ell)$.
        \item Let $(c_{i,\gamma}')_{i\in[n],\gamma\in\mathbb{Z}_m}$ be cards specified as follows:
        \begin{align*}
            \overbrace{\underset{c_{1,0}'}{\back}\cdots\underset{c_{n,0}'}{\back}}^{p'(0)}
            \cdots
            \overbrace{\underset{c_{1,m-1}'}{\back}\cdots\underset{c_{n,m-1}'}{\back}}^{p'(m-1)}.
        \end{align*}
        \item Output $((c_{1,\gamma}')_{\gamma\in\mathbb{Z}_m},\ldots,(c_{n,\gamma}')_{\gamma\in\mathbb{Z}_m})$.
    \end{enumerate}
\end{description}}}
\caption{A card-based protocol for multiplying inputs by $u\sample U$}
\label{figure:full-open-additive-PSM-1}
\end{figure}

\begin{proposition}\label{proposition:full-open-additive-PSM-1}
Continuing the notations in Figure~\ref{figure:full-open-additive-PSM-1}, for any $(y_i)_{i\in[n]}\in\mathbb{Z}_m^n$, a sequence of face-down cards with suits $(\henc_m(y_1),\ldots,\henc_m(y_n))$ is transformed into a sequence with suits $(\henc_m(uy_1),\ldots,\henc_m(uy_n))$, where $u\sample U$.
The procedures involve one pile-shifting shuffle. 
\end{proposition}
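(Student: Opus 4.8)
The plan is to prove the proposition by tracking a single invariant through the six rearrangement steps: the content of each pile as a function of its index. After Step~1, for every $\gamma\in\mathbb{Z}_m$ the pile $p(\gamma)=(c_{1,\gamma},\ldots,c_{n,\gamma})$ has $c_{i,\gamma}=\heartsuit$ if and only if $\gamma=y_i$, since the unique $\heartsuit$ of $\henc_m(y_i)$ sits in position $y_i$. Hence the input is fully captured by the map $\gamma\mapsto p(\gamma)$, and it suffices to show that the output map satisfies $p'(\gamma)=p(\gamma w)$ for all $\gamma\in\mathbb{Z}_m$, where $w\in U$ is some value that is uniformly distributed over $U$ as the shuffle's shift varies. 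Indeed, reading off the $i$-th coordinate of each output pile then gives a sequence whose unique $\heartsuit$ lies at the position $\gamma$ with $\gamma w=y_i$, i.e.\ at $\gamma=y_i w^{-1}$, so the $i$-th output pile equals $\henc_m(y_i w^{-1})$; since inversion is a bijection of $U$, the element $u:=w^{-1}$ is again uniform on $U$, which gives the claimed form $(\henc_m(uy_1),\ldots,\henc_m(uy_n))$ with $u\sample U$.

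Next I would carry out the index bookkeeping for Steps~2--5. Because $\mathbb{Z}_m^*=\bigcup_{s\in[t]}\beta_sU$ is a disjoint union and $U=\{\alpha_0,\ldots,\alpha_{k-1}\}$, every $\gamma\in\mathbb{Z}_m^*$ admits a unique representation $\gamma=\beta_s\alpha_\ell$, so the regroupings in Steps~2 and~5 are bijections of the index set and no pile is created or destroyed. After the pile-shifting shuffle (Step~3) and the relabeling $q'(\ell)=q(\ell+\Delta)$ of Step~4, the $s$-th block of $q'(\ell)$ is $p(\beta_s\alpha_{\ell+\Delta})$, with the subscript of $\alpha$ reduced modulo $k$. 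The key identity is $\alpha_{\ell+\Delta}=\alpha_0^{\ell+\Delta+1}=\alpha_0^{\ell+1}\cdot\alpha_0^{\Delta}=\alpha_\ell\cdot\alpha_0^{\Delta}$, which is valid because $\alpha_0^{k}=1$, so reducing the subscript modulo $k$ is harmless; therefore $\beta_s\alpha_{\ell+\Delta}=(\beta_s\alpha_\ell)\,\alpha_0^{\Delta}=\gamma\,\alpha_0^{\Delta}$. Since by Step~5 the pile $p'(\beta_s\alpha_\ell)$ is exactly that $s$-th block, we obtain $p'(\gamma)=p(\gamma\,\alpha_0^{\Delta})$ for every $\gamma\in\mathbb{Z}_m^*$, and $p'(0)=p(0)=p(0\cdot\alpha_0^{\Delta})$ trivially, so $p'(\gamma)=p(\gamma\,\alpha_0^{\Delta})$ holds for all $\gamma\in\mathbb{Z}_m$. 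Thus $w=\alpha_0^{\Delta}$ in the reduction of the previous paragraph.

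It then remains to identify the distribution of $w$ and count shuffles. Because $\Delta\sample\mathbb{Z}_k$ and $\alpha_0$ generates the order-$k$ subgroup $U$, the map $\Delta\mapsto\alpha_0^{\Delta}$ is a bijection from $\mathbb{Z}_k$ onto $U$, so $w=\alpha_0^{\Delta}$ is uniform on $U$; by the first paragraph the output equals $(\henc_m(uy_1),\ldots,\henc_m(uy_n))$ with $u=w^{-1}\sample U$. The only shuffle invoked in the whole procedure is the single pile-shifting shuffle of Step~3. I expect the delicate part to be precisely this modular bookkeeping in Steps~2--5, together with the passage from the shift $\Delta$ to the group element $u$ (equivalently, the observation that inversion is a bijection of $U$); once the identity $\alpha_{\ell+\Delta}=\alpha_\ell\alpha_0^{\Delta}$ is in hand, everything else is routine verification that the rearrangements are well defined.
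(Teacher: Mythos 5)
Your proposal is correct and follows essentially the same argument as the paper's proof: the paper tracks the unique $\heartsuit$ of each $\henc_m(y_i)$ forward through Steps~1--5 and lands on $c'_{i,uy_i}=\heartsuit$ with $u=\alpha_0^{-\Delta}$, which is exactly your relation $p'(\gamma)=p(\gamma\,\alpha_0^{\Delta})$ read in the other direction, resting on the same identity $\alpha_{\ell+\Delta}=\alpha_\ell\alpha_0^{\Delta}$ and the same uniformity argument for $u$. No gaps.
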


\begin{proof}
Let $(y_i)_{i\in[n]}\in\mathbb{Z}_m^n$.
Let $d_i=c_{i,y_i}$ denote the unique suit $\heartsuit$ contained in $\henc_m(y_i)$.
After Step~1, for each $i\in[n]$, the suit $d_i$ is moved to the $i$-th position of the pile $p(y_i)$.

For $i\in[n]$ with $y_i=0$, $d_i$ remains in the pile $p(0)$ after Step~3. 
Hence, $d_i$ is placed at the $i$-th position of $p'(0)$, i.e., $c_{i,0}'=d_i=\heartsuit$ and $c_{i,\gamma}'=\clubsuit$ for all $\gamma\neq0$.

For $i\in[n]$ with $y_i\neq0$, let $s_i\in[t]$ and $\ell_i\in\mathbb{Z}_k$ be such that $y_i=\beta_{s_i}\alpha_{\ell_i}$.
Note that at Step~4, $p(\gamma_{s,\ell})$ is the $s$-th pile of $q'(\ell)=q(\ell+\Delta)$ and $\gamma_{s,\ell}=\beta_s\alpha_{\ell+\Delta}=\beta_s\alpha_0^{\ell+\Delta+1}=\beta_s\alpha_\ell \alpha_0^\Delta$.
Thus, $\gamma_{s,\ell}=y_i$ if and only if $s=s_i$ and $\alpha_{\ell}=\alpha_{\ell_i}\alpha_0^{-\Delta}$, i.e., $\ell=\ell_i-\Delta \bmod k$.
This means that the suit $d_i$ is moved to the $s_i$-th pile $p(\gamma_{s_i,\ell_i-\Delta})$ of $q'(\ell_i-\Delta)$.
At Step~5, $d_i$ is contained in $p'(\gamma)$ with $\gamma=\gamma_{s_i,\ell_i-\Delta}=uy_i$, where $u=\alpha_0^{-\Delta}$.
Hence, $c_{i,uy_i}'=d_i=\heartsuit$ and $c_{i,\gamma}'=\clubsuit$ for any $\gamma\neq uy_i$.

In summary, $(c_{i,\gamma}')_{\gamma\in\mathbb{Z}_m}$ is identical to $\henc_m(uy_i)$ for all $i\in[n]$.
Since $\Delta$ is uniformly distributed over $\mathbb{Z}_k$, $u=\alpha_0^{-\Delta}$ is uniformly distributed over $U$.
\end{proof}

The following proposition shows a card-based protocol for adding random additive shares $(r_i)_{i\in[n]}$ of zero to $(y_i)_{i\in[n]}$.

\begin{figure}
\centering
\fbox{\parbox{\textwidth}{
\begin{description}
    \item[Input.] $mn$ face-down cards with suits $(\henc_m(y_1),\ldots,\henc_m(y_n))$, where $y_i\in\mathbb{Z}_m$ and $\henc_m(y_i)=(c_{i,0},\ldots,c_{i,m-1})\in\{\heartsuit,\clubsuit\}^m$.
    
    \item[Output.] $mn$ face-down cards with suits $(\henc_m(y_1-r_1),\ldots,\henc_m(y_n-r_n))$, where $(r_i)_{i\in[n]}$ are random additive shares of zero.

    \item[Procedures.] Let $X_n$ be a variable on $\mathbb{Z}_m$. Initially, set $X_n \leftarrow y_n$. For each $i = 1, 2, \ldots, n-1$, do the following procedures. 
        \begin{enumerate}
            \item Reverse the order of the integer encoding $\henc_{m}(y_i)$ except the $0$-th card as follows:
            \[
            \underbrace{\overset{0}{\back}\,\overset{1}{\back}\,\overset{2}{\back}\,\cdots\,\overset{m-1}{\back}}_{\henc_{m}(y_i)}\, \rightarrow \, 
            \underbrace{\overset{0}{\back}\,\overset{m-1}{\back}\,\cdots\,\overset{2}{\back}\,\overset{1}{\back}}_{\henc_{m}(-y_i)}.
            \]
            \item Place $\henc_{m}(X_n)$ on the bottom as follows:
            \begin{align*}
            & \overbrace{\back \, \back \, \cdots \, \back}^{\henc_{m}(-y_i)}\\
            & \underbrace{\back \, \back \, \cdots \, \back}_{\henc_{m}(X_n)}
            \end{align*}
            \item Apply a pile-shifting shuffle as follows:
            \[
            \left\langle
            \begin{tabular}{c|c|c|c}
            \back & \back & $\cdots$ & \back \rule[0mm]{0mm}{4mm}\\
            \back & \back & $\cdots$ & \back \rule[0mm]{0mm}{4mm}
            \end{tabular}\right\rangle
            \]
            Here, the upper sequence is $\henc_{m}(-y_i + r_i)$ and the bottom sequence is $\henc_{m}(X_n + r_i)$ for $r_i \sample\mathbb{Z}_m$. 
            \item Reverse the order of the upper sequence and obtain $\henc_{m}(y_i - r_i)$. Then we have $\henc_{m}(y_i - r_i)$ and $\henc_{m}(X_n + r_i)$. Update $X_n \leftarrow X_n + r_i$ and proceed to the next iteration. 
        \end{enumerate}
\end{description}}}
\caption{A card-based protocol for adding random additive shares of zeros to inputs}
\label{figure:full-open-additive-PSM-2}
\end{figure}

\begin{proposition}\label{proposition:full-open-additive-PSM-2}
Continuing the notations in Figure~\ref{figure:full-open-additive-PSM-2}, for any $(y_i)_{i\in[n]}\in\mathbb{Z}_m^n$, a sequence of face-down cards with suits $(\henc_m(y_1),\ldots,\henc_m(y_n))$ is transformed into a sequence with suits $(\henc_m(y_1-r_1),\ldots,\henc_m(y_n-r_n))$, where $(r_i)_{i\in[n]}$ are random additive shares of zero.
The procedures involve $n-1$ pile-shifting shuffles.
\end{proposition}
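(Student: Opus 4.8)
The plan is to prove the claim by a straightforward induction on the loop iterations in Figure~\ref{figure:full-open-additive-PSM-2}, tracking the invariant that after iteration $i$ the variable $X_n$ holds $y_n + r_1 + \cdots + r_i$ and the pile previously labelled $\henc_m(y_i)$ has been replaced by $\henc_m(y_i - r_i)$. The two ingredients I need are: (i) a lemma that reversing all but the $0$-th card of $\henc_m(x)$ yields $\henc_m(-x)$, and (ii) a lemma that a pile-shifting shuffle applied to the two stacked sequences $\henc_m(a)$ (top) and $\henc_m(b)$ (bottom) produces $\henc_m(a+r)$ and $\henc_m(b+r)$ for a uniformly random $r \in \mathbb{Z}_m$, and that this $r$ is independent of $a,b$.

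For ingredient (i): by definition $\henc_m(x)$ has $\heartsuit$ in position $x$ and $\clubsuit$ elsewhere. Reversing positions $1,2,\ldots,m-1$ sends position $j$ to position $m-j$ for $j \neq 0$ and fixes position $0$. If $x = 0$ the $\heartsuit$ stays at position $0$, and $-0 = 0$, so the result is $\henc_m(0)$. If $x \neq 0$ the $\heartsuit$ moves from position $x$ to position $m - x \equiv -x \pmod m$, giving $\henc_m(-x)$. For ingredient (ii): a pile-shifting shuffle on these two $m$-card sequences arranged as $m$ piles of two cards applies a common cyclic shift by a uniformly random $r \in \mathbb{Z}_m$ to both sequences simultaneously; cyclically shifting $\henc_m(a)$ by $r$ moves the $\heartsuit$ from position $a$ to position $a + r \bmod m$, i.e.\ yields $\henc_m(a+r)$, and likewise for $\henc_m(b)$. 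The shift $r$ is uniform on $\mathbb{Z}_m$ by definition of the pile-shifting shuffle and is chosen independently of the card contents.

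Now I carry out the induction. The base case is the initialization $X_n \leftarrow y_n$. For the inductive step, suppose before iteration $i$ we have sequences $\henc_m(y_1-r_1),\ldots,\henc_m(y_{i-1}-r_{i-1})$ for the already-processed indices, the untouched sequences $\henc_m(y_i),\ldots,\henc_m(y_{n-1})$, and a sequence $\henc_m(X_n)$ with $X_n = y_n + \sum_{j<i} r_j$. In Step~1 we turn $\henc_m(y_i)$ into $\henc_m(-y_i)$ by ingredient (i). In Steps~2--3 we stack $\henc_m(-y_i)$ over $\henc_m(X_n)$ and apply the pile-shifting shuffle, obtaining $\henc_m(-y_i+r_i)$ and $\henc_m(X_n+r_i)$ for a fresh uniform $r_i \in \mathbb{Z}_m$ independent of everything so far, by ingredient (ii). In Step~4 we reverse the upper sequence, which by ingredient (i) applied to $x = -y_i + r_i$ gives $\henc_m(y_i - r_i)$, and we update $X_n \leftarrow X_n + r_i$, restoring the invariant. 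After iteration $n-1$ we have $\henc_m(y_1-r_1),\ldots,\henc_m(y_{n-1}-r_{n-1})$ together with $\henc_m(X_n)$ where $X_n = y_n + \sum_{j=1}^{n-1} r_j$; setting $r_n := -\sum_{j=1}^{n-1} r_j$ gives $X_n = y_n - r_n$, so the last sequence is $\henc_m(y_n - r_n)$ and $\sum_{i=1}^n r_i = 0$. Since $r_1,\ldots,r_{n-1}$ are independent and uniform over $\mathbb{Z}_m$ and $r_n$ is determined by them, $(r_i)_{i\in[n]}$ are exactly random additive shares of zero. Finally, exactly $n-1$ pile-shifting shuffles are used, one per iteration.

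I do not expect a genuine obstacle here; the statement is essentially bookkeeping. The one point that deserves care is the independence claim: I must argue that each $r_i$ is independent of $y_i$ and of the previously sampled shifts, so that the joint distribution of $(r_1,\ldots,r_{n-1})$ is the product of uniform distributions on $\mathbb{Z}_m$ — this follows because the pile-shifting shuffle samples its shift uniformly and independently of the (face-down) card contents, and the only value fed forward between iterations, $X_n$, never influences the choice of the next shift. A secondary minor point is making precise the ``reverse the order except the $0$-th card'' operation as the permutation $j \mapsto m - j$ on $\{1,\ldots,m-1\}$ so that ingredient (i) applies verbatim in both Step~1 and Step~4.
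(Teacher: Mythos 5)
Your proof is correct and follows essentially the same route as the paper's: an induction over the $n-1$ loop iterations maintaining the invariant that the processed piles hold $\henc_m(y_j-r_j)$ and the last pile holds $\henc_m(y_n+\sum_{j\le i}r_j)$, with $r_n$ then defined as $-\sum_{j=1}^{n-1}r_j$. Your explicit verification of the reversal step (position $j\mapsto m-j$ giving $\henc_m(-x)$) and of the common-shift effect of the pile-shifting shuffle, together with the independence of the shifts, only spells out details the paper leaves implicit in the figure.
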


\begin{proof}
Let $i=1$.
After the first iteration, we obtain two piles with suits $\henc_m(y_1-r_1)$ and $\henc_m(X_n+r_1)=\henc_m(y_n+r_1)$, where $r_1\sample\mathbb{Z}_m$ is a random element resulting from a pile-shifting shuffle at Step~3.

Let $i\geq 1$.
Assume that after the $i$-th iteration, we obtain $i+1$ piles with suits $\henc_m(y_1-r_1),\ldots,\henc_m(y_{i}-r_{i}),\henc_m(y_n+R_i)$, where $r_1,\ldots,r_{i}\sample\mathbb{Z}_m$ are independent random elements and $R_i=\sum_{j=1}^{i}r_j$.
Then, in the $(i+1)$-th iteration, we have $X_n=y_n+R_i$.
After Steps~3~and~4, we obtain two piles with suits $\henc_m(y_{i+1}-r_{i+1})$ and $\henc_m(x_n+r_{i+1})=\henc_m(y_n+R_i+r_{i+1})$.
We thus obtain $i+2$ piles with suits $\henc_m(y_1-r_1),\ldots,\henc_m(y_{i}-r_{i}),\henc_m(y_{i+1}-r_{i+1}),\henc_m(y_n+R_{i+1})$, where $R_{i+1}=\sum_{j=1}^{i+1}r_j$.

By induction on $i$, we finally obtain $n$ piles with suits $\henc_m(y_1-r_1),\ldots,\henc_m(y_{n-1}-r_{n-1}),\henc_m(y_n-r_n)$, where $r_1,\ldots,r_{n-1}\sample\mathbb{Z}_m$ are independent random elements and $r_n=-R_n=-\sum_{j=1}^{n-1}r_j$.
\end{proof}

Finally, by combining the above protocols, we construct a single-shuffle full-open protocol for $f$ from an additive PSM protocol for $f$.

\begin{figure}
\centering
\fbox{\parbox{\textwidth}{
\begin{description}
    \item[Notations.]\mbox{}
        \begin{itemize}       
            \item Let $f:\bin^n\ra\bin$ be a function.
            \item Let $\Sigma=(\mathsf{Gen},\mathsf{Enc},\mathsf{Dec})$ be an additive PSM protocol over $\mathbb{Z}_m$ for $f$.
            \item Let $U$ be a subgroup of $\mathbb{Z}_m^*$ specified by $\Sigma$.
            \item Let $(g_i:\bin\ra\mathbb{Z}_m)_{i\in[n]}$ be $n$ functions specified by $\Sigma$.
        \end{itemize}

    \item[Construction of $L_i^b$.] For each $i\in[n]$ and each $b\in\bin$, let $L_i^b=\henc_m(g_i(b))$.

    \item[Shuffling.] Given a sequence of $mn$ face-down cards $\underbrace{{\back}\,\cdots\,{\back}}_{\henc_m(g_1(x_1))}\cdots\underbrace{{\back}\,\cdots\,{\back}}_{\henc_m(g_n(x_n))}\,$,
    
    \begin{enumerate}
        \item Apply the procedures in Figure~\ref{figure:full-open-additive-PSM-1} to obtain  
        \[
        \underbrace{{\back}\,\cdots\,{\back}}_{\henc_m(u\cdot g_1(x_1))}\cdots\underbrace{{\back}\,\cdots\,{\back}}_{\henc_m(u\cdot g_n(x_n))}\,,
        \]
        where $u\sample U$.
        \item Apply the procedures in Figure~\ref{figure:full-open-additive-PSM-2} to obtain
        \begin{align*}
            \underbrace{{\back}\,\cdots\,{\back}}_{\henc_m(u\cdot g_1(x_1)-r_1)}\cdots\underbrace{{\back}\,\cdots\,{\back}}_{\henc_m(u\cdot g_n(x_n)-r_n)}\,,
        \end{align*}
        where $(r_i)_{i\in[n]}$ are random additive shares of zero.
    \end{enumerate}

    \item[Output.]\mbox{}
    \begin{enumerate}
        \item Reveal all face-down cards.
        \item For each $i\in[n]$, let $\mu_i$ be the position of $\heartsuit$ in the sequence $\henc_m(u\cdot g_i(x_i)-r_i)$.
        \item Run $\mathsf{Dec}$ on input $(\mu_1,\ldots,\mu_n)$ to obtain $y\in\bin$.
        \item Output $y$.
    \end{enumerate}
\end{description}}}
\caption{A single-shuffle full-open protocol based on an additive PSM protocol}
\label{figure:full-open-additive-PSM}
\end{figure}

\begin{theorem}\label{theorem:full-open-additive-PSM}
    Let $\Sigma=(\mathsf{Gen},\mathsf{Enc},\mathsf{Dec})$ be an additive PSM protocol over $\mathbb{Z}_m$ for a function $f:\bin^n\ra\bin$.
    Then, there exists a single-shuffle full-open protocol $\Pi$ for $f$ such that
    \begin{itemize}
        \item The number of cards is $mn$.
        \item The shuffle is a composition of $n$ pile-shifting shuffles.
    \end{itemize}
\end{theorem}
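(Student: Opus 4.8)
\textbf{Proof plan for Theorem~\ref{theorem:full-open-additive-PSM}.}
The plan is to verify that the protocol $\Pi$ described in Figure~\ref{figure:full-open-additive-PSM} is a correct and private single-shuffle full-open protocol, and then to count cards and shuffles. First I would observe that the shuffling phase is the composition of the procedure in Figure~\ref{figure:full-open-additive-PSM-1} (one pile-shifting shuffle, by Proposition~\ref{proposition:full-open-additive-PSM-1}) and the procedure in Figure~\ref{figure:full-open-additive-PSM-2} (exactly $n-1$ pile-shifting shuffles, by Proposition~\ref{proposition:full-open-additive-PSM-2}), so $\Pi$ uses a composition of $n$ pile-shifting shuffles in total. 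Since we follow the encoding rule $\henc_m$, each $\henc_m(g_i(x_i))$ is represented by $m$ cards and there are $n$ parties, so the number of cards is $mn$; moreover $L_i^0=\henc_m(g_i(0))$ and $L_i^1=\henc_m(g_i(1))$ each contain exactly one $\heartsuit$ and $m-1$ $\clubsuit$s, so the count in Definition~\ref{definition:ssfo} indeed equals $mn$.

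Next I would establish correctness. Starting from face-down cards with suits $(\henc_m(g_1(x_1)),\ldots,\henc_m(g_n(x_n)))$ and writing $y_i := g_i(x_i)$, Proposition~\ref{proposition:full-open-additive-PSM-1} transforms these into $(\henc_m(u y_1),\ldots,\henc_m(u y_n))$ for $u$ uniform over $U$. Then Proposition~\ref{proposition:full-open-additive-PSM-2}, applied with inputs $u y_i$ in place of $y_i$, yields $(\henc_m(u y_1 - r_1),\ldots,\henc_m(u y_n - r_n))$ where $(r_i)_{i\in[n]}$ are random additive shares of zero, independent of $u$. When all cards are revealed, the position $\mu_i$ of the unique $\heartsuit$ in the $i$-th block is exactly $u y_i - r_i \bmod m = u \cdot g_i(x_i) - r_i \bmod m$, which is precisely the message $\mathsf{Enc}(i, x_i, \rho)$ of the additive PSM protocol $\Sigma$ under randomness $\rho = (u, (r_i)_{i\in[n]})$. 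Hence $\mathsf{Dec}(\mu_1,\ldots,\mu_n) = f(x_1,\ldots,x_n)$ by the correctness of $\Sigma$, so $\mathsf{Output}(\mathsf{st}_{\bm{x}}) = f(\bm{x})$ with probability $1$.

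For privacy, fix $\bm{x},\bm{x}'\in\bin^n$ with $f(\bm{x}) = f(\bm{x}')$. The final state $\mathsf{st}_{\bm{x}}$ is the sequence of revealed suits, which is completely determined by $(\mu_1,\ldots,\mu_n) = (u\cdot g_i(x_i) - r_i)_{i\in[n]}$ since each $\henc_m(\mu_i)$ block is a deterministic function of $\mu_i$. By Propositions~\ref{proposition:full-open-additive-PSM-1}~and~\ref{proposition:full-open-additive-PSM-2}, $u$ is uniform over $U$ and $(r_i)_{i\in[n]}$ are random additive shares of zero, drawn independently; this is exactly the distribution of $\rho \la \mathsf{Gen}()$ in the additive PSM protocol. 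Therefore the distribution of $(\mu_i)_{i\in[n]}$ equals the distribution of $(\mathsf{Enc}(i,x_i,\rho))_{i\in[n]}$, and the privacy of $\Sigma$ gives that this distribution coincides with that of $(\mathsf{Enc}(i,x_i',\rho'))_{i\in[n]}$. Since $\mathsf{st}_{\bm{x}}$ and $\mathsf{st}_{\bm{x}'}$ are the same deterministic image of these identically distributed tuples, they are identically distributed, which is the required privacy property.

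The main subtlety to get right — and the step I would spend the most care on — is ensuring that the randomness $(u, (r_i))$ produced by chaining the two sub-protocols is genuinely distributed as $\mathsf{Gen}()$ of an additive PSM, in particular that $u$ (produced in the first phase) and the shares $(r_i)$ (produced in the second phase) are \emph{independent}, and that the $(r_i)$ are indeed uniform conditioned only on summing to zero. This follows because the pile-shifting shuffle in Figure~\ref{figure:full-open-additive-PSM-1} uses fresh randomness $\Delta$ and the $n-1$ pile-shifting shuffles in Figure~\ref{figure:full-open-additive-PSM-2} use fresh independent shifts $r_1,\ldots,r_{n-1}$ with $r_n = -\sum_{j<n} r_j$, and all these shuffle randomnesses are mutually independent by the definition of shuffle composition; but it is worth stating explicitly, since the correctness of the PSM decoder is only guaranteed for randomness drawn exactly according to its own $\mathsf{Gen}$.
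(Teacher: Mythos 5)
Your proposal is correct and follows essentially the same route as the paper's own proof: verify via Propositions~\ref{proposition:full-open-additive-PSM-1} and~\ref{proposition:full-open-additive-PSM-2} that the revealed positions $(\mu_i)_{i\in[n]}$ are distributed exactly as the messages of the additive PSM protocol, then invoke its correctness and privacy, and count $mn$ cards and $1+(n-1)=n$ pile-shifting shuffles. Your explicit attention to the independence of $u$ and $(r_i)_{i\in[n]}$ is a worthwhile addition that the paper leaves implicit, but it does not change the argument.
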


\begin{proof}
The protocol $\Pi$ is described in Figure~\ref{figure:full-open-additive-PSM}.
At Step~2 in the output phase, we have that $\mu_i=u\cdot g_i(x_i)-r_i$ for each $i\in[n]$.
Since $u\in U$ and $\sum_{i\in[n]}r_i=0$, $\mathsf{Dec}$ correctly outputs $y=f(x_1,\ldots,x_n)$.

Furthermore, $u$ is uniformly distributed over $U$ and $(r_i)_{i\in[n]}$ are random additive shares of zero.
Thus, the distribution of $(\mu_i)_{i\in[n]}$ is identical to that of messages of the additive PSM protocol $\Sigma$.
The privacy of $\Sigma$ ensures that $(\mu_i)_{i\in[n]}$ and hence the suits of revealed cards depend only on $f(x_1,\ldots,x_n)$.
\end{proof}

It is known that there exists an additive PSM protocol over $\mathbb{Z}_m$ for any function $f:\bin^n\rightarrow\bin$ if $m=2^{O(2^n)}$ \cite{Ish13}.
Although it is possible to apply Theorem~\ref{theorem:full-open-additive-PSM} to this additive PSM protocol, the resulting single-shuffle full-open protocol requires $mn=2^{O(2^n)}$ cards, which does not improve the number of cards $2^{O(n^32^{n/2})}$ of the construction in Corollary~\ref{corollary:SSFO-all}.

Nevertheless, Theorem~\ref{theorem:full-open-additive-PSM} improves the number of cards for special functions if we use additive PSM protocols in Propositions~\ref{proposition:psm-and}~and~\ref{proposition:psm-sym}.
The following corollaries show single-shuffle full-open protocols for the $n$-input AND function and any symmetric function, using fewer cards than Corollary~\ref{corollary:SSFO-all}.

\begin{corollary}\label{corollary:SSFO-AND}
There exists a single-shuffle full-open protocol $\Pi$ for $\mathsf{AND}_n$ such that the number of cards is $O(n^2)$ and the shuffle is a composition of $n$ pile-shifting shuffles.
\end{corollary}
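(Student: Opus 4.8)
The plan is to obtain the corollary by feeding the additive PSM protocol for $\mathsf{AND}_n$ from Proposition~\ref{proposition:psm-and} into the transformation of Theorem~\ref{theorem:full-open-additive-PSM}. Proposition~\ref{proposition:psm-and} provides, for every prime $m>n$, an additive PSM protocol over $\mathbb{Z}_m$ for $\mathsf{AND}_n$ with $U=\mathbb{Z}_m^*$ and suitable functions $(g_i:\bin\ra\mathbb{Z}_m)_{i\in[n]}$. Plugging this protocol into Theorem~\ref{theorem:full-open-additive-PSM} immediately yields a single-shuffle full-open protocol $\Pi$ for $\mathsf{AND}_n$ using $mn$ cards whose shuffle is a composition of $n$ pile-shifting shuffles. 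Hence the only design choice left is to pick $m$ as small as possible.

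First I would take $m$ to be the least prime strictly larger than $n$. By Bertrand's postulate there is always a prime in the interval $(n,2n)$, so such an $m$ exists and satisfies $m=O(n)$. With this choice the number of cards becomes $mn=O(n^2)$, and the shuffle complexity, namely $n$ pile-shifting shuffles, is inherited verbatim from Theorem~\ref{theorem:full-open-additive-PSM}. Correctness and privacy also carry over for free: Theorem~\ref{theorem:full-open-additive-PSM} reduces them to the correctness and privacy of the underlying additive PSM protocol, which are guaranteed by Proposition~\ref{proposition:psm-and}.

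There is essentially no hard step here; the corollary is a direct composition of two results already established in the excerpt. The only point worth a sentence of care is checking that the protocol of Proposition~\ref{proposition:psm-and} really conforms to the structural template required by Theorem~\ref{theorem:full-open-additive-PSM} --- that its randomness is of the form $(u,(r_i)_{i\in[n]})$ with $u\sample U$ and $(r_i)_{i\in[n]}$ random additive shares of zero, that each message has the form $u\cdot g_i(x_i)-r_i \bmod m$, and that $\mathsf{Dec}$ depends only on $\sum_{i\in[n]}\mu_i \bmod m$. But this is precisely what it means for Proposition~\ref{proposition:psm-and} to be phrased as an \emph{additive} PSM protocol over $\mathbb{Z}_m$, so no further verification is needed, and the proof is complete once the choice of $m$ above is recorded.
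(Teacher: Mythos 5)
Your derivation is correct and matches the paper's intended argument exactly: instantiate Theorem~\ref{theorem:full-open-additive-PSM} with the additive PSM protocol of Proposition~\ref{proposition:psm-and}, choosing $m$ to be a prime in $(n,2n)$ (Bertrand's postulate) so that $mn=O(n^2)$. Nothing further is needed.
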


\begin{corollary}\label{corollary:SSFO-Sym}
For any symmetric function $f:\bin^n\ra\bin$, there exists a single-shuffle full-open protocol $\Pi$ for $f$ such that the number of cards is $(1+o(1))n^3 2^{2n-2}$ and the shuffle is a composition of $n$ pile-shifting shuffles.
\end{corollary}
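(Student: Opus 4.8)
The plan is to obtain this corollary as an immediate instantiation of Theorem~\ref{theorem:full-open-additive-PSM} with the additive PSM protocol for symmetric functions supplied by Proposition~\ref{proposition:psm-sym}. First I would invoke Proposition~\ref{proposition:psm-sym}: for a symmetric $f:\bin^n\ra\bin$ and all sufficiently large $n$, there is an additive PSM protocol $\Sigma$ over $\mathbb{Z}_m$ with $m=(1+o(1))n^2 2^{2n-2}$ (and $U\subseteq\mathbb{Z}_m^*$ the subgroup of quadratic residues). Since $\Sigma$ is additive over the prime field $\mathbb{Z}_m$, Theorem~\ref{theorem:full-open-additive-PSM} applies directly and yields a single-shuffle full-open protocol $\Pi$ for $f$ whose shuffle is a composition of $n$ pile-shifting shuffles (one from Figure~\ref{figure:full-open-additive-PSM-1} and $n-1$ from Figure~\ref{figure:full-open-additive-PSM-2}) and which uses exactly $mn$ cards.

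It then remains only to substitute the bound on $m$: the number of cards is $mn=(1+o(1))n^2 2^{2n-2}\cdot n=(1+o(1))n^3 2^{2n-2}$, matching the claimed count. For the finitely many small values of $n$ not covered by the ``sufficiently large $n$'' clause of Proposition~\ref{proposition:psm-sym}, the $o(1)$ asymptotic notation absorbs any bounded discrepancy, so the stated bound holds for all $n$.

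The one point worth a second look is that the subgroup $U$ used in Proposition~\ref{proposition:psm-sym} is a proper subgroup of $\mathbb{Z}_m^*$, whereas the informal exposition in Section~\ref{sec:overview} treated only the easier case $U=\mathbb{Z}_m^*$; this is why I would explicitly appeal to the formal statements rather than the overview. However, Figure~\ref{figure:full-open-additive-PSM-1} and Proposition~\ref{proposition:full-open-additive-PSM-1} are already written for an arbitrary subgroup $U=\{\alpha_0,\ldots,\alpha_{k-1}\}$ via coset representatives $\beta_1,\ldots,\beta_t$ with $t=(m-1)/k$, so the multiplication-by-$u$ step inside Theorem~\ref{theorem:full-open-additive-PSM} correctly produces $u\sample U$ for any $k=|U|$. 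Hence there is no genuine obstacle here: the proof is a one-line composition of the two cited results followed by the arithmetic above, and I do not foresee any step requiring new ideas.
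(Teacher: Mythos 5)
Your proposal is correct and is exactly the derivation the paper intends: instantiate Theorem~\ref{theorem:full-open-additive-PSM} with the additive PSM protocol of Proposition~\ref{proposition:psm-sym}, giving $mn=(1+o(1))n^3 2^{2n-2}$ cards and $n$ pile-shifting shuffles. Your observation that Figure~\ref{figure:full-open-additive-PSM-1} already handles a proper subgroup $U\subseteq\mathbb{Z}_m^*$ via coset representatives is the right thing to check and poses no obstacle.
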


\subsection{A Single-Shuffle Full-Open Protocol for Any Function}

We extend the single-shuffle full-open protocol for $\mathsf{AND}_n$ to a protocol that applies to any function $f:\bin^n\ra\bin$.
The protocol is obtained by concatenating many protocols for $\mathsf{AND}_n$ each corresponding to an input $\bm{a}$ with $f(\bm{a})=1$.
This method requires a composition of $|f^{-1}(1)|=O(2^n)$ shuffles but achieves a smaller number of cards compared to the general solution in Corollary~\ref{corollary:SSFO-all}.

\begin{figure}
\centering
\fbox{\parbox{\textwidth}{
\begin{description}
    \item[Notations.]\mbox{}
        \begin{itemize}       
            \item Let $f:\bin^n\ra\bin$ be a function.
            \item Let $f^{-1}(1)=\{\bm{a}\in\bin^n:f(\bm{a})=1\}=\{\bm{a}^{(1)},\ldots,\bm{a}^{(N)}\}$ and denote $\bm{a}^{(j)}=(a_1^{(j)},\ldots,a_{n}^{(j)})$.
            \item Let $\Pi_0$ be a single-shuffle full-open protocol for $\mathsf{AND}_n$.
            \item Let $(\lambda_i^b)_{i\in[n],b\in\bin}$ be $2n$ sequences of suits, $\mathcal{D}_0$ be a probability distribution, $\mathsf{Output}_0$ be an output function specified by $\Pi_0$.
        \end{itemize}

    \item[Construction of $L_i^b$.] For each $i\in[n]$ and each $b\in\bin$, let $L_i^b=\lambda_i^{\phi^{(1)}(b)}||\cdots||\lambda_i^{\phi^{(N)}(b)}$, where $||$ denotes the concatenation of suits and we set $\phi^{(j)}(b)=1$ if $a_i^{(j)}=b$ and $\phi^{(j)}(b)=0$ otherwise.

    \item[Shuffling.] Given a sequence of face-down cards
    \begin{align*}
        \underbrace{\back\cdots\back}_{\lambda_1^{\phi^{(1)}(x_1)}}\cdots\underbrace{\back\cdots\back}_{\lambda_1^{\phi^{(N)}(x_1)}}
       \cdots
        \underbrace{\back\cdots\back}_{\lambda_n^{\phi^{(1)}(x_n)}}\cdots\underbrace{\back\cdots\back}_{\lambda_n^{\phi^{(N)}(x_n)}}\,,
    \end{align*}
    \begin{enumerate}
        \item For each $j\in[N]$, sample a permutation $\pi^{(j)}\leftarrow\mathcal{D}_0$ independently and apply $\pi^{(j)}$ to part of the sequence
        \begin{align*}
            \underbrace{\back\cdots\back}_{\lambda_1^{\phi^{(j)}(x_1)}}\cdots\underbrace{\back\cdots\back}_{\lambda_n^{\phi^{(j)}(x_n)}}
            ~\ra~
            \underbrace{\back\cdots\back}_{\mu^{(j)}}.            
        \end{align*}
        \item Apply a pile-scrambling shuffle as follows:
        \begin{align*}
            \Big[\underbrace{\back\cdots\back}_{\mu^{(1)}}
            \Big|\cdots\Big|
            \underbrace{\back\cdots\back}_{\mu^{(N)}}\Big]
            ~\ra~
            \underbrace{\back\cdots\back}_{\mu^{(j_1)}}
           \cdots
            \underbrace{\back\cdots\back}_{\mu^{(j_N)}}.
        \end{align*}
    \end{enumerate}

    \item[Output.]\mbox{}
    \begin{enumerate}
        \item Reveal all face-down cards.
        \item Compute $(y^{(1)},\ldots,y^{(N)})=(\mathsf{Output}_0(\mu^{(j_1)}),\ldots,\mathsf{Output}_0(\mu^{(j_N)}))$.
        \item If there exists a unique $h\in[N]$ such that $y^{(h)}=1$, then output $1$. Otherwise, output $0$.
    \end{enumerate}
\end{description}}}
\caption{A single-shuffle full-open protocol based on a single-shuffle full-open protocol for $\mathsf{AND}_n$}
\label{figure:full-open-AND}
\end{figure}

\begin{theorem}\label{theorem:full-open-AND}
    Let $\Pi_0$ be a single-shuffle full-open protocol for the $n$-input AND function.
    Then, there exists a single-shuffle full-open protocol $\Pi$ for a function $f:\bin^n\ra\bin$ such that
    \begin{itemize}
        \item The number of cards is $N\ell$, where $N=|f^{-1}(1)|=|\{\bm{a}\in\bin^n:f(\bm{a})=1\}|$ and $\ell$ is the number of cards of $\Pi_0$.
        \item The shuffle is a composition of $N$ independent shuffles for $\Pi_0$ and a pile-scramble shuffle.
    \end{itemize}
\end{theorem}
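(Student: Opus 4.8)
The plan is to analyze the protocol $\Pi$ described in Figure~\ref{figure:full-open-AND} directly, verifying correctness and privacy, and then reading off the card count and shuffle complexity. First I would set up notation: for an input $\bm{x}=(x_1,\ldots,x_n)$ and each $j\in[N]$, write $\bm{b}^{(j)}(\bm{x})=(\phi^{(j)}(x_1),\ldots,\phi^{(j)}(x_n))\in\bin^n$, so that $\phi^{(j)}(x_i)=1$ iff $x_i=a_i^{(j)}$, and hence $\mathsf{AND}_n(\bm{b}^{(j)}(\bm{x}))=1$ iff $\bm{x}=\bm{a}^{(j)}$. Since $\{\bm{a}^{(1)},\ldots,\bm{a}^{(N)}\}=f^{-1}(1)$, exactly one $j$ has $\mathsf{AND}_n(\bm{b}^{(j)}(\bm{x}))=1$ when $f(\bm{x})=1$ (the unique $j$ with $\bm{a}^{(j)}=\bm{x}$), and none does when $f(\bm{x})=0$. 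This is the combinatorial identity $f(\bm{x})=\sum_{j\in[N]}\mathsf{AND}_n(\bm{b}^{(j)}(\bm{x}))$ already used in Section~\ref{sec:overview}.

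For correctness: observe that applying $\pi^{(j)}\la\mathcal{D}_0$ to the subsequence encoding $(\lambda_1^{\phi^{(j)}(x_1)},\ldots,\lambda_n^{\phi^{(j)}(x_n)})$ produces a final state $\mu^{(j)}$ which, by the correctness of $\Pi_0$, satisfies $\mathsf{Output}_0(\mu^{(j)})=\mathsf{AND}_n(\bm{b}^{(j)}(\bm{x}))$ with probability $1$. The subsequent pile-scramble shuffle permutes the piles $\mu^{(1)},\ldots,\mu^{(N)}$ according to some $\sigma\in\mathfrak{S}_N$, so the revealed piles are $\mu^{(j_1)},\ldots,\mu^{(j_N)}$ where $(j_1,\ldots,j_N)$ is a permutation of $[N]$; applying $\mathsf{Output}_0$ to each yields the multiset $\{\mathsf{AND}_n(\bm{b}^{(j)}(\bm{x})):j\in[N]\}$. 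By the identity above, this multiset contains exactly one $1$ iff $f(\bm{x})=1$ and no $1$ iff $f(\bm{x})=0$, so the output rule in Step~3 returns $f(\bm{x})$ with probability $1$.

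For privacy, fix $\bm{x},\bm{x}'$ with $f(\bm{x})=f(\bm{x}')$. I would argue in two stages. First, for each fixed $j$, the privacy of $\Pi_0$ says that the distribution of $\mu^{(j)}$ depends only on $\mathsf{AND}_n(\bm{b}^{(j)}(\bm{x}))$; since the $\pi^{(j)}$ are sampled independently, the joint distribution of $(\mu^{(1)},\ldots,\mu^{(N)})$ depends only on the vector $(\mathsf{AND}_n(\bm{b}^{(j)}(\bm{x})))_{j\in[N]}$. The subtlety — and the main obstacle — is that this vector is \emph{not} the same for $\bm{x}$ and $\bm{x}'$: if $f(\bm{x})=f(\bm{x}')=1$, the unique index $h$ with a $1$ generally differs. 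This is exactly where the pile-scramble shuffle is needed: after applying a uniformly random $\sigma\in\mathfrak{S}_N$ to the piles, the revealed sequence $(\mu^{(j_1)},\ldots,\mu^{(j_N)})$ is distributed as a uniformly random ordering of $N$ independent piles, of which (when the output is $1$) exactly one is drawn from the ``$\mathsf{AND}_n=1$'' distribution and $N-1$ from the ``$\mathsf{AND}_n=0$'' distribution — a description that depends only on $f(\bm{x})$, not on which coordinate $h$ carried the $1$. When $f(\bm{x})=f(\bm{x}')=0$, all $N$ piles are drawn from the ``$\mathsf{AND}_n=0$'' distribution for both inputs, so the same conclusion holds trivially. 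Hence $\mathsf{st}_{\bm{x}}$ and $\mathsf{st}_{\bm{x}'}$ are identically distributed. To make this rigorous I would note that a pile-scramble shuffle followed by full opening yields a distribution on sequences of piles that is invariant under permutations of the piles, hence is a function only of the multiset of per-pile distributions, which in turn is determined by the multiset $\{\mathsf{AND}_n(\bm{b}^{(j)}(\bm{x}))\}_j$ and thus by $f(\bm{x})$. Finally, the card count is immediate: each $L_i^b$ is a concatenation of $N$ blocks $\lambda_i^{\phi^{(j)}(b)}$, and across $i\in[n]$ the $j$-th blocks together use exactly $\ell$ cards (the card count of $\Pi_0$, using the footnoted fact that $\lambda_i^0$ and $\lambda_i^1$ have the same suit composition), so $\Pi$ uses $N\ell$ cards; the shuffle is the composition of the $N$ independent copies of $\Pi_0$'s shuffle (acting on disjoint blocks) with one pile-scramble shuffle, as claimed.
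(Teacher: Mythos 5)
Your proposal is correct and follows essentially the same route as the paper's proof: per-pile correctness and privacy inherited from $\Pi_0$, with the pile-scramble shuffle symmetrizing the position of the unique output-$1$ pile so that the revealed distribution depends only on the multiset $\{\mathsf{AND}_n(\bm{b}^{(j)}(\bm{x}))\}_{j\in[N]}$, hence only on $f(\bm{x})$. Your phrasing of the privacy step via permutation-invariance of the symmetrized product distribution is a slightly more explicit rendering of the same argument the paper gives.
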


\begin{proof}
Let $\Pi$ be a card-based protocol described in Figure~\ref{figure:full-open-AND}.

Let $\bm{x}=(x_i)_{i\in[n]}$ be inputs.
The sequence of suits $(\lambda_i^{\phi^{(j)}(x_i)})_{i\in[n]}$ corresponds to an input sequence when executing $\Pi_0$ on input $(\phi^{(j)}(x_i))_{i\in[n]}$.
Thus, $\mathsf{Output}_0(\mu^{(j)})=1$ if and only if $\phi^{(j)}(x_1)=\cdots=\phi^{(j)}(x_n)=1$, i.e., $\bm{x}=\bm{a}^{(j)}$.
On the other hand, if $f(\bm{x})=1$, then there exists a unique $j\in[N]$ such that $\bm{x}=\bm{a}^{(j)}$, and if $f(\bm{x})=0$, then $\bm{x}\neq\bm{a}^{(j)}$ for all $j\in[N]$.
Therefore, if $f(\bm{x})=1$, then there exists a unique $y^{(h)}$ such that $y^{(h)}=1$ at Step~3 in the output phase, and the protocol outputs $1$. (This indicates that $\bm{x}=\bm{a}^{(j_h)}$.)
If $f(\bm{x})=0$, then $y^{(1)}=\cdots=y^{(N)}=0$ and the protocol outputs $0$.
This shows the correctness of $\Pi$.

To see privacy, let $\bm{x}=(x_i)_{i\in[n]}$ and $\tilde{\bm{x}}=(\tilde{x}_i)_{i\in[n]}$ be inputs with $f(\bm{x})=f(\tilde{\bm{x}})$.
Let $(\mu^{(j_1)},\ldots,\mu^{(j_N)})$ and $(\tilde{\mu}^{(j_1)},\ldots,\tilde{\mu}^{(j_N)})$ denote the suits of cards revealed in the output phase when executing $\Pi$ on input $\bm{x}$ and on input $\tilde{\bm{x}}$, respectively.
First, assume that $f(\bm{x})=f(\tilde{\bm{x}})=0$, i.e., $\bm{x},\tilde{\bm{x}}\notin\{\bm{a}^{(1)},\ldots,\bm{a}^{(N)}\}$.
Then, for any $j\in[N]$, both the outputs of $\Pi_0$ on input $(\phi^{(j)}(x_i))_{i\in[n]}$ and on input $(\phi^{(j)}(\tilde{x}_i))_{i\in[n]}$ are $0$.
Thus, the privacy of $\Pi_0$ ensures that the distributions of $\mu^{(j)}$ and $\tilde{\mu}^{(j)}$ are identical, which then implies that the distributions of $(\mu^{(j_h)})_{h\in[N]}$ and $(\tilde{\mu}^{(j_h)})_{h\in[N]}$ are identical.
Next, assume that $f(\bm{x})=f(\tilde{\bm{x}})=1$.
Then, during the execution of $\Pi$ on input $\bm{x}$, there exists a unique $j\in[N]$ such that $\Pi_0$ outputs $1$ on input $(\phi^{(j)}(x_i))_{i\in[n]}$.
Also, during the execution of $\Pi$ on input $\tilde{\bm{x}}$, there exists a unique $j'\in[N]$ such that $\Pi_0$ outputs $1$ on input $(\phi^{(j')}(\tilde{x}_i))_{i\in[n]}$.
Thus, the privacy of $\Pi_0$ ensures that the distribution of $\mu^{(j)}$ is identical to that of $\tilde{\mu}^{(j')}$; and that all other sequences $\mu^{(k)}$ for $k\in[N]\setminus\{j\}$ and $\tilde{\mu}^{(k')}$ for $k'\in[N]\setminus\{j'\}$ follow the same distribution (namely, the distribution of cards revealed during the execution of $\Pi_0$ with output $0$).
Since the $N$ piles $(\mu^{(1)},\ldots,\mu^{(N)})$ and $(\tilde{\mu}^{(1)},\ldots,\tilde{\mu}^{(N)})$ are randomly permuted at Step~2 in the shuffling phase, we conclude that their resulting distributions are identical.

For each $j\in[N]$, the number of cards to encode $(\lambda_i^{\phi^{(j)}(x_i)})_{i\in[n]}$ is $\ell$.
Thus, the total number of cards is $N\ell$.
\end{proof}

By applying Theorem~\ref{theorem:full-open-AND} to the protocol for $\mathsf{AND}_n$ in Corollary~\ref{corollary:SSFO-AND}, we obtain a single-shuffle full-open protocol for any function using fewer cards than Corollary~\ref{corollary:SSFO-all}.

\begin{corollary}\label{corollary:SSFO-all-2}
For any $f:\bin^n\ra\bin$, there exists a single-shuffle full-open protocol for $f$ such that the number of cards is $O(n^2 2^n)$ and the shuffle is a composition of at most $(n+1)2^n$ pile-shifting shuffles and a pile-scramble shuffle.
\end{corollary}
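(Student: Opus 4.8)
The plan is to derive Corollary~\ref{corollary:SSFO-all-2} as an immediate instantiation of Theorem~\ref{theorem:full-open-AND}, taking as the building block $\Pi_0$ the single-shuffle full-open protocol for $\mathsf{AND}_n$ supplied by Corollary~\ref{corollary:SSFO-AND}. First I would recall that this $\Pi_0$ uses $\ell=O(n^2)$ cards and that its shuffle is a composition of $n$ pile-shifting shuffles. Feeding $\Pi_0$ into Theorem~\ref{theorem:full-open-AND} then produces a single-shuffle full-open protocol $\Pi$ for $f$ whose number of cards is $N\ell$, where $N=|f^{-1}(1)|$, and whose shuffle is the composition of $N$ independent copies of the shuffle of $\Pi_0$ followed by one pile-scramble shuffle; correctness and privacy of $\Pi$ are already guaranteed by Theorem~\ref{theorem:full-open-AND}.

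Next I would carry out the parameter accounting. Since $f^{-1}(1)\subseteq\bin^n$ we have $N\le 2^n$, so the card count is $N\ell\le 2^n\cdot O(n^2)=O(n^22^n)$. For the shuffle, I would invoke the paper's convention that any composition of shuffles — including shuffles applied to disjoint subsequences of the deck — counts as a single shuffle; each of the $N$ copies of the shuffle of $\Pi_0$ contributes $n$ pile-shifting shuffles, so the shuffle of $\Pi$ is a composition of $nN\le n2^n\le(n+1)2^n$ pile-shifting shuffles together with the one pile-scramble shuffle, matching the claimed bound.

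I do not expect any real obstacle here, since Theorem~\ref{theorem:full-open-AND} and Corollary~\ref{corollary:SSFO-AND} already do all the work and the argument is essentially bookkeeping. The one point I would be careful about is the degenerate case $f\equiv 0$, where $N=0$: here the construction of Figure~\ref{figure:full-open-AND} still formally applies (the deck, the per-$\mathsf{AND}$ shuffles, and the pile-scramble shuffle are all empty, and the output function always returns $0$), so it yields a trivial zero-card protocol that is vacuously consistent with the stated bound; otherwise one may assume $N\ge 1$ throughout.
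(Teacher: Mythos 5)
Your proposal is correct and matches the paper's own derivation exactly: Corollary~\ref{corollary:SSFO-all-2} is obtained by instantiating Theorem~\ref{theorem:full-open-AND} with the $\mathsf{AND}_n$ protocol of Corollary~\ref{corollary:SSFO-AND}, giving $N\ell\le 2^n\cdot O(n^2)$ cards and $nN\le n2^n\le(n+1)2^n$ pile-shifting shuffles plus one pile-scramble shuffle. The remark about the degenerate case $f\equiv 0$ is a harmless extra precaution not addressed in the paper.
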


\section{Relation with PSM}
\label{section:relation}

Shinagawa and Nuida \cite{SN25} showed that a single-shuffle full-open protocol for $f:\bin^n\ra\bin$ using $\ell$ cards implies a PSM protocol for $f$ with communication complexity $c=\ell n$ and randomness complexity $r=\ell \log \ell +\ell n$.
Conversely, Theorem~\ref{theorem:full-open-1} shows that a PSM protocol for $f$ with communication complexity $c$ and randomness complexity $r$ implies a single-shuffle full-open protocol for $f$ using $\ell=c2^{r+1}$ cards.

Combining both results, we can relate the minimum number of cards required by single-shuffle full-open protocols for $f$ to the minimum communication/randomness complexity of PSM protocols for $f$.
We prepare notations.
For $f:\bin^n\ra\bin$, we define
\begin{itemize}
    \item $\card(f)$ as the minimum number of cards required by any single-shuffle full-open protocol for $f$.
    \item $\rand(f)$ as the minimum randomness complexity of any PSM protocol for $f$.
    \item $\comm(f)$ as the minimum communication complexity of any PSM protocol for $f$.
\end{itemize}

First, we show two lemmas regarding the communication and randomness complexity of PSM protocols\footnote{These results are folklore to the best of our knowledge. We include formal proofs for the sake of completeness.}.

\begin{lemma}\label{lemma:reduction-communication}
    Let $\Sigma=(\mathsf{Gen},\mathsf{Enc},\mathsf{Dec})$ be a PSM protocol for a function $f:\bin^n\ra\bin$ with communication complexity $c$ and randomness complexity $r$.
    Then, there exists a PSM protocol $\Sigma'=(\mathsf{Gen}',\mathsf{Enc}',\mathsf{Dec}')$ for $f$ with communication complexity $c'=\min\{c,(r+1)n\}$ and randomness complexity $r'=r$.
\end{lemma}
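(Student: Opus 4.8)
The plan is to split into two cases according to which term achieves the minimum in $\min\{c,(r+1)n\}$. If $c \le (r+1)n$, I simply take $\Sigma' := \Sigma$ and there is nothing more to do. So the substantive case is $c > (r+1)n$, where the goal is to compress each party's message down to at most $r+1$ bits without touching the randomness.

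The key observation is that party $i$'s message $\mathsf{Enc}(i,x_i,\rho)$ is a function of only two quantities: the input bit $x_i\in\bin$ and the randomness $\rho\in\mathrm{supp}(\mathcal{D})\subseteq\bin^r$. Hence the set $S_i := \{\mathsf{Enc}(i,b,\rho) : b\in\bin,\ \rho\in\mathrm{supp}(\mathcal{D})\}$ of all messages party $i$ could ever send satisfies $|S_i|\le 2\cdot 2^r = 2^{r+1}$. Crucially, $S_i$ is determined by the public description of $\Sigma$ alone, so one can fix, once and for all, an injection $\iota_i : S_i \hookrightarrow \bin^{r+1}$ (for instance, the position of an element in a lexicographic enumeration of $S_i$, padded to $r+1$ bits), known to everyone including the referee; let $\iota_i^{-1}$ denote its inverse on the image $\iota_i(S_i)$ (extended arbitrarily to a total function).

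The compressed protocol $\Sigma'$ is then: $\mathsf{Gen}' := \mathsf{Gen}$; $\mathsf{Enc}'(i,x_i,\rho) := \iota_i(\mathsf{Enc}(i,x_i,\rho))$; and $\mathsf{Dec}'(\mu_1',\ldots,\mu_n') := \mathsf{Dec}(\iota_1^{-1}(\mu_1'),\ldots,\iota_n^{-1}(\mu_n'))$. Correctness is immediate, since $\iota_i^{-1}(\iota_i(\mu_i))=\mu_i$, so $\mathsf{Dec}'$ reproduces exactly the computation of $\mathsf{Dec}$ in $\Sigma$, which is correct by assumption. Privacy is preserved because each $\iota_i$ is a fixed map independent of inputs and randomness: the joint distribution of $(\mathsf{Enc}'(i,x_i,\rho))_{i\in[n]}$ is the pushforward of the joint distribution of $(\mathsf{Enc}(i,x_i,\rho))_{i\in[n]}$ under the fixed map $(\iota_1,\ldots,\iota_n)$, so if the latter depends only on $f(\bm{x})$ then so does the former. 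The randomness complexity is unchanged, $r'=r$, and the communication complexity is $(r+1)n$ since each $\mu_i'\in\bin^{r+1}$.

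There is essentially no serious obstacle here; the only point requiring care is the bound $|S_i|\le 2^{r+1}$, which relies on $\mathrm{supp}(\mathcal{D})$ having at most $2^r$ elements (as $\mathcal{D}$ is a distribution on $\bin^r$) — note in particular that this does not require $\mathcal{D}$ to be uniform. Combining the two cases yields a PSM protocol for $f$ with communication complexity $c' = \min\{c,(r+1)n\}$ and randomness complexity $r'=r$, as claimed.
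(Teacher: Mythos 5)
Your proof is correct and follows essentially the same route as the paper's: both exploit that party $i$ can send at most $2\cdot 2^r=2^{r+1}$ distinct messages, so each message can be replaced by its index in a fixed public enumeration (at most $r+1$ bits), with the decoder inverting the indexing; correctness is immediate and privacy is preserved since the re-indexing is a fixed per-party injection applied to the message distribution. The only cosmetic difference is that the paper performs the compression per party, only for those $i$ with $c_i>r+1$ (yielding $\sum_i\min\{c_i,r+1\}$), whereas you use a single global case split on $c$ versus $(r+1)n$ — both give the claimed bound.
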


\begin{proof}
Let $i\in[n]$ and suppose that the bit-length $c_i$ of messages for party $i$ is larger than $r+1$.
Define a map $M_i:\bin\times\bin^r\ra\bin^{c_i}$ as $M_i(b,\rho)=\mathsf{Enc}(i,b,\rho)$ for any $b\in\bin$ and $\rho\in\bin^r$.
Define $\mathcal{M}_i=\{M_i(b,\rho)\in\bin^{c_i}:b\in\bin,\rho\in\bin^r\}$.
Let $C_i=|\mathcal{M}_i|$ and enumerate $\mathcal{M}_i=\{m_{i,1},\ldots,m_{i,C_i}\}$.
Define $\phi_i:\bin\times\bin^r\ra [C_i]$ as a map that maps $(b,\rho)$ into $k$ such that $M_i(b,\rho)=m_{i,k}$.
We define a new PSM protocol $\Sigma^{(i)}=(\mathsf{Gen}^{(i)},\mathsf{Enc}^{(i)},\mathsf{Dec}^{(i)})$ for $f$ as $\mathsf{Gen}^{(i)}=\mathsf{Gen}$,
\begin{align*}
    \mathsf{Enc}^{(i)}(j,b,\rho)
    &=\begin{cases}
        \phi_i(b,\rho)\in[C_i],~&\text{if}~j=i,\\
        \mathsf{Enc}(j,b,\rho)\in\bin^{c_j},~&\text{otherwise,}
    \end{cases}\\
    \mathsf{Dec}^{(i)}(\mu_1,\ldots,\mu_n)
    &=\mathsf{Dec}(\mu_1,\ldots,\mu_{i-1},m_{i,k},\mu_{i+1},\ldots,\mu_n),
\end{align*}
where $k=\mu_i\in[C_i]$.
Since $\lceil\log C_i\rceil\leq r+1$, the communication complexity of $\Sigma^{(i)}$ is $\sum_{j\neq i}c_j+\min\{c_i,r+1\}$ and the randomness complexity of $\Sigma^{(i)}$ is $r$.
By applying the above procedures to every $i\in[n]$ with $c_i>r+1$, we obtain a PSM protocol $\Sigma'=(\mathsf{Gen}',\mathsf{Enc}',\mathsf{Dec}')$ with communication complexity $c'=\sum_{i\in[n]}\min\{c_i,r+1\}\leq\min\{c,(r+1)n\}$ and randomness complexity $r'=r$.
\end{proof}

\begin{lemma}\label{lemma:reduction-randomness}
    Let $\Sigma=(\mathsf{Gen},\mathsf{Enc},\mathsf{Dec})$ be a PSM protocol for a function $f:\bin^n\ra\bin$ with communication complexity $c$ and randomness complexity $r$.
    Then, there exists a PSM protocol $\Sigma'=(\mathsf{Gen}',\mathsf{Enc}',\mathsf{Dec}')$ for $f$ with communication complexity $c'=c$ and randomness complexity $r'=\min\{r,c2^n\}$.
\end{lemma}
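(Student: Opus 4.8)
The plan is to mirror the proof of Lemma~\ref{lemma:reduction-communication}, but acting on the randomness rather than on the messages: I would ``re-encode'' the randomness of $\Sigma$ by the table of all messages it induces over all inputs. If $r\le c2^n$ there is nothing to prove, so I take $\Sigma'=\Sigma$; assume henceforth $r>c2^n$.

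For each $\rho\in\mathrm{supp}(\mathcal{D})$ I define the \emph{message table} $T_\rho\in\prod_{\bm{x}\in\bin^n}\prod_{i\in[n]}\bin^{c_i}$ by $T_\rho(\bm{x})=(\mathsf{Enc}(i,x_i,\rho))_{i\in[n]}$. Viewed as a bit string, $T_\rho$ has length $2^n\cdot\sum_{i\in[n]}c_i=c2^n$. I would let $\mathsf{Gen}'$ draw $\rho\la\mathsf{Gen}()$ and output $\rho':=T_\rho\in\bin^{c2^n}$, so that $\mathcal{D}'$ is the pushforward of $\mathcal{D}$ along $\rho\mapsto T_\rho$. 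For $i\in[n]$, $b\in\bin$, and a string $T\in\bin^{c2^n}$ (parsed as such a table), I define $\mathsf{Enc}'(i,b,T)$ to be the block of $T(\bm{x}_b)$ belonging to party $i$, where $\bm{x}_b$ is (say) the lexicographically smallest input whose $i$-th coordinate equals $b$. Finally I set $\mathsf{Dec}'=\mathsf{Dec}$.

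The one point that actually needs checking is that $\mathsf{Enc}'$ is well defined and faithfully reproduces $\Sigma$'s messages: since $\mathsf{Enc}$ is \emph{local} --- party $i$'s message is a function of $i$, $x_i$, and $\rho$ only --- all inputs with $i$-th coordinate $b$ agree on party $i$'s block of $T_\rho$, and that block equals $\mathsf{Enc}(i,b,\rho)$. Hence $\mathsf{Enc}'(i,b,T_\rho)=\mathsf{Enc}(i,b,\rho)$ for every $\rho\in\mathrm{supp}(\mathcal{D})$. Correctness of $\Sigma'$ is then immediate: on input $\bm{x}$ and any $\rho'=T_\rho$, the decoder receives exactly $(\mathsf{Enc}(i,x_i,\rho))_{i\in[n]}$, so $\mathsf{Dec}'$ outputs $f(\bm{x})$ with probability $1$ by correctness of $\Sigma$. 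Privacy is equally immediate: for $f(\bm{x})=f(\bm{x}')$, and since $\rho'=T_\rho$ for $\rho\la\mathsf{Gen}()$ with $\mathsf{Enc}'(i,x_i,T_\rho)=\mathsf{Enc}(i,x_i,\rho)$, the distribution of $(\mathsf{Enc}'(i,x_i,\rho'))_{i\in[n]}$ under $\rho'\la\mathsf{Gen}'()$ is exactly that of $(\mathsf{Enc}(i,x_i,\rho))_{i\in[n]}$ under $\rho\la\mathsf{Gen}()$, which by privacy of $\Sigma$ depends only on $f(\bm{x})$. The communication complexity is unchanged, $c'=c$ (party $i$ still sends a string in $\bin^{c_i}$), and the randomness complexity is $r'=c2^n$; combined with the trivial case $r\le c2^n$ this yields $r'=\min\{r,c2^n\}$ and $c'=c$ in all cases.

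I do not expect a genuine obstacle here: the argument is pure bookkeeping, and the only subtlety worth stating explicitly is the locality of $\mathsf{Enc}$, which is precisely what lets party $i$ recover its message from $T_\rho$ without knowing the other coordinates and hence makes the re-encoding a legal PSM protocol. (I note in passing that recording only the $2n$ values $\mathsf{Enc}(i,b,\rho)$ would already suffice and gives the sharper bound $2c\le c2^n$; I keep the full $c2^n$-bit table only to match the stated bound and to parallel Lemma~\ref{lemma:reduction-communication}.)
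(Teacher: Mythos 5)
Your proposal is correct, and it rests on the same core observation as the paper's proof --- namely that all the information in $\rho$ relevant to the protocol is the $c2^n$-bit message table $\bm{x}\mapsto(\mathsf{Enc}(i,x_i,\rho))_{i\in[n]}$ --- but the construction is executed differently. The paper argues by pigeonhole: if $|\mathrm{supp}(\mathcal{D})|>2^{c2^n}$ then two randomness values $\rho_1\neq\rho_2$ induce identical message tables, so their probability mass can be merged; iterating this shrinks the support to size at most $2^{c2^n}$, after which the surviving values are re-indexed by strings of length $\lceil\log R\rceil\leq c2^n$. You instead take the pushforward of $\mathcal{D}$ along $\rho\mapsto T_\rho$ in one shot and let the table itself be the new randomness, with $\mathsf{Enc}'$ reading off party $i$'s block via locality of $\mathsf{Enc}$. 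The two constructions yield the same bound and share the same caveat the paper flags in Section~\ref{section:relation}: neither preserves uniformity of the randomness distribution (your pushforward collapses colliding $\rho$'s just as the paper's merging step does). Your route is arguably cleaner --- it avoids the iteration and the explicit re-indexing --- and your parenthetical remark is a genuine sharpening: by locality, the table is determined by the $2n$ values $(\mathsf{Enc}(i,b,\rho))_{i\in[n],b\in\bin}$, i.e., by $2c$ bits, so the lemma actually holds with $r'=\min\{r,2c\}$, which would slightly improve the second inequality in Theorem~\ref{theorem:relation}.
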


\begin{proof}
Let $\mathcal{D}$ be a probability distribution over $\bin^{r}$ from which $\Sigma$ samples randomness.
Suppose that $|\mathrm{supp}(\mathcal{D})|>2^{c2^n}$.
For each $\rho\in\mathrm{supp}(\mathcal{D})$, define $\mathsf{Msg}_\rho:\bin^n\ra\bin^{c}$ as $\mathsf{Msg}_\rho(\bm{x})=(\mathsf{Enc}(i,x_i,\rho))_{i\in[n]}$ for any $\bm{x}=(x_i)_{i\in[n]}\in\bin^n$.
Then, there exist $\rho_1\neq\rho_2\in\mathrm{supp}(\mathcal{D})$ such that $\mathsf{Msg}_{\rho_1}$ is identical to $\mathsf{Msg}_{\rho_2}$, i.e., $\mathsf{Msg}_{\rho_1}(\bm{x})=\mathsf{Msg}_{\rho_2}(\bm{x})$ for all $\bm{x}\in\bin^n$.
Define $\mathcal{D}_1$ as a probability distribution whose support is $\mathrm{supp}(\mathcal{D}_1)=\mathrm{supp}(\mathcal{D})\setminus\{\rho_2\}$ and whose probability mass function is
\begin{align}\label{equation:probability}
    \probsub{\mathcal{D}_1}{\rho}=
    \begin{cases}
        \probsub{\mathcal{D}}{\rho},~&\text{if}~\rho\neq\rho_1,\\
        \probsub{\mathcal{D}}{\rho_1}+\probsub{\mathcal{D}}{\rho_2},~&\text{if}~\rho=\rho_1.
    \end{cases}
\end{align}
Let $\Sigma_1$ be a PSM protocol which is the same as $\Sigma$ except that randomness is chosen according to $\mathcal{D}_1$.
Since both $\rho_1$ and $\rho_2$ produce identical messages for all inputs, the distribution of messages does not change if $\mathcal{D}$ is replaced with $\mathcal{D}_1$, which means that $\Sigma_1$ is also a PSM protocol for $f$.
Note that $|\mathrm{supp}(\mathcal{D}_1)|=|\mathrm{supp}(\mathcal{D})|-1$.
We can apply the above procedures as long as the size of the support of the probability distribution of randomness is larger than $2^{c2^n}$.
Thus, we obtain a PSM protocol $\Sigma''=(\mathsf{Gen}'',\mathsf{Enc}'',\mathsf{Dec}'')$ for $f$ such that the size of the support of the distribution $\mathcal{D}''$ of randomness is $R:=|\mathrm{supp}(\mathcal{D}'')|\leq 2^{c2^n}$.
Enumerate $\mathrm{supp}(\mathcal{D}'')=\{\rho_1,\ldots,\rho_{R}\}$.
Let $r'=\lceil\log R\rceil\leq c2^n$.
Define $\mathcal{D}'$ as a probability distribution over $\bin^{r'}$ whose probability mass function is $\probsub{\mathcal{D}'}{\sigma}=\rho_\sigma$ for any $\sigma\in\bin^{r'}$ (we here identify $\sigma\in[R]$ with its binary representation).
We define $\Sigma'=(\mathsf{Gen}',\mathsf{Enc}',\mathsf{Dec}')$ as follows:
$\mathsf{Gen}'$ samples randomness $\sigma\in\bin^{r'}$ according to $\mathcal{D}'$, $\mathsf{Enc}'(i,b,\sigma)=\mathsf{Enc}''(i,b,\rho_\sigma)$, and $\mathsf{Dec}'=\mathsf{Dec}''$.
Then, $\Sigma'$ is a PSM protocol for $f$ with randomness complexity $r'=\min\{r,c2^n\}$.
\end{proof}

Finally, we show relations among $\card(f)$, $\rand(f)$ and $\comm(f)$.

\begin{theorem}\label{theorem:relation}
For any $f:\bin^n\ra\bin$, it holds that
\begin{align}
    \frac{\rand(f)}{\log(\rand(f))+n}&\leq \card(f)\leq 2^{\rand(f)+1}(\rand(f)+1)n,\label{equation:relation-1}\\
    \frac{\comm(f)}{n}&\leq\card(f)\leq 2^{\comm(f)2^n+1}\comm(f).\label{equation:relation}
\end{align}
\end{theorem}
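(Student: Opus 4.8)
The plan is to prove each of the four inequalities in \eqref{equation:relation-1} and \eqref{equation:relation} by combining the two directions of the card-based/PSM correspondence, together with the two normalization lemmas (Lemmas~\ref{lemma:reduction-communication} and~\ref{lemma:reduction-randomness}) just established. The two upper bounds on $\card(f)$ are the ``PSM $\Rightarrow$ cards'' direction (Theorem~\ref{theorem:full-open-1}), and the two lower bounds on $\card(f)$ are the ``cards $\Rightarrow$ PSM'' direction of Shinagawa--Nuida \cite{SN25}. I will handle them one at a time.

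\textbf{Lower bound in \eqref{equation:relation-1}.} Take an optimal single-shuffle full-open protocol for $f$ using $\ell = \card(f)$ cards. By \cite{SN25} it yields a PSM protocol for $f$ with randomness complexity $r = \ell\log\ell + \ell n$. By definition of $\rand(f)$ we have $\rand(f)\le \ell\log\ell+\ell n = \ell(\log\ell+n) = \card(f)(\log(\card(f))+n)$. Since $\card(f)\le \rand(f)$ is immediate from the same inequality (or trivially), one can bound $\log(\card(f))\le\log(\rand(f))$, giving $\rand(f)\le\card(f)(\log(\rand(f))+n)$, i.e. $\rand(f)/(\log(\rand(f))+n)\le\card(f)$, which is the left inequality.

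\textbf{Upper bound in \eqref{equation:relation-1}.} Fix a PSM protocol for $f$ achieving randomness complexity $r=\rand(f)$ with some communication complexity $c$. Apply Lemma~\ref{lemma:reduction-communication} to obtain a PSM protocol for $f$ with randomness complexity still $r$ and communication complexity at most $(r+1)n=(\rand(f)+1)n$. Feeding this into Theorem~\ref{theorem:full-open-1} gives a single-shuffle full-open protocol for $f$ using at most $c'\cdot 2^{r+1}\le 2^{\rand(f)+1}(\rand(f)+1)n$ cards, hence $\card(f)\le 2^{\rand(f)+1}(\rand(f)+1)n$. The argument for \eqref{equation:relation} is symmetric: for the left inequality, an optimal $\ell$-card protocol gives via \cite{SN25} a PSM protocol of communication complexity $\ell n$, so $\comm(f)\le\card(f)\cdot n$; for the right inequality, start from a PSM protocol achieving $c=\comm(f)$, apply Lemma~\ref{lemma:reduction-randomness} to cap the randomness complexity at $c2^n=\comm(f)2^n$ without changing $c$, and then invoke Theorem~\ref{theorem:full-open-1} to get $\card(f)\le c\cdot 2^{c2^n+1}=2^{\comm(f)2^n+1}\comm(f)$.

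\textbf{Main obstacle.} None of the four steps is individually deep once the two lemmas are in hand; the only real care needed is in the lower bound of \eqref{equation:relation-1}, where the right-hand side of the \cite{SN25} bound involves $\log\ell$ rather than $\log\rand(f)$, so one must justify replacing $\log(\card(f))$ by $\log(\rand(f))$ (valid because $\card(f)\le\rand(f)$, which itself follows from the bound, provided $\card(f)\ge 1$). One should also double-check that the ``by default $\mathcal{D}$ uniform over a subset of $\bin^r$'' convention used throughout is consistent with the hypotheses of Theorem~\ref{theorem:full-open-1} and its non-uniform extension, and that the composition of a pile-shifting and a complete shuffle is still counted as a single shuffle so that the output of Theorem~\ref{theorem:full-open-1} is indeed a legitimate single-shuffle full-open protocol and thus a valid witness for $\card(f)$.
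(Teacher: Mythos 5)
Three of your four inequalities are argued exactly as in the paper: the two upper bounds come from normalizing an optimal PSM protocol via Lemma~\ref{lemma:reduction-communication} (resp.\ Lemma~\ref{lemma:reduction-randomness}) and feeding it into Theorem~\ref{theorem:full-open-1}, and the bound $\comm(f)\le \card(f)\cdot n$ is read off directly from \cite{SN25}. Those parts are correct.

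The lower bound in \eqref{equation:relation-1} is where you go wrong, and it is precisely the step you flagged as delicate. From \cite{SN25} you correctly get $\rand(f)\le \card(f)\bigl(\log(\card(f))+n\bigr)$, and you then want to replace $\log(\card(f))$ by $\log(\rand(f))$, which requires $\card(f)\le\rand(f)$. You assert this is ``immediate from the same inequality (or trivially)'', but it is neither: the inequality $\rand(f)\le \card(f)(\log(\card(f))+n)$ only bounds $\rand(f)$ from \emph{above} in terms of $\card(f)$, so it cannot yield $\card(f)\le\rand(f)$, and there is no a priori reason for the number of cards to be at most the PSM randomness complexity (indeed the right-hand side of \eqref{equation:relation-1} allows $\card(f)$ to be exponentially larger than $\rand(f)$). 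The paper instead uses the elementary implication ``$a\le b\log b$ implies $a/\log a\le b$'', applied with $a=\rand(f)2^n$ and $b=\card(f)2^n$ after rewriting the \cite{SN25} bound as $\rand(f)2^n\le(\card(f)2^n)\log(\card(f)2^n)$. Your argument can also be repaired by contraposition: if $\card(f)<\rand(f)/(\log(\rand(f))+n)$, then in particular $\card(f)<\rand(f)$ (since the denominator is at least $1$), so $\log(\card(f))<\log(\rand(f))$ and hence $\rand(f)\le\card(f)(\log(\card(f))+n)<\card(f)(\log(\rand(f))+n)<\rand(f)$, a contradiction. Either way, some such argument must be supplied; as written the step does not follow.
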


\begin{proof}
Let $\ell=\card(f)$, $r=\rand(f)$, and $c=\comm(f)$.
It follows from \cite{SN25} that $r\leq \ell\log\ell+\ell n=\ell\log(\ell2^n)$, i.e., $r2^n\leq (\ell2^n)\log(\ell2^n)$, and $c\leq \ell n$.
Since $a\leq b\log b$ implies $a/\log a\leq b$,\footnote{Otherwise, $b\log b<(a/\log a)\log(a/\log a)=a-a\log\log a/\log a<a$.} we have that $r/(\log r+n)\leq \ell$. 

Let $\Pi_{\mathrm{r}}$ be a PSM protocol for $f$ achieving the minimum randomness complexity $r$.
Due to Lemma~\ref{lemma:reduction-communication}, we may assume that the communication complexity of $\Pi_{\mathrm{r}}$ is at most $(r+1)n$. 
Then, it follows from Theorem~\ref{theorem:full-open-1} that $\ell\leq 2^{r+1}(r+1)n$.

Let $\Pi_{\mathrm{c}}$ be a PSM protocol for $f$ achieving the minimum communication complexity $c$.
Due to Lemma~\ref{lemma:reduction-randomness}, we may assume that the randomness complexity of $\Pi_{\mathrm{c}}$ is at most $c2^n$. 
Then, it follows from Theorem~\ref{theorem:full-open-1} that $\ell\leq 2^{c2^n+1}c$.
\end{proof}

We note that the inequalities in (\ref{equation:relation-1}) and the first one in (\ref{equation:relation}) still hold if card-based protocols and PSM protocols are restricted to the \textit{uniform} ones, by which we mean that a shuffle and randomness are sampled from uniform distributions over some sets.
Indeed, if the initial PSM protocol uses uniform randomness, then the transformation in Theorem~\ref{theorem:full-open-1} yields a card-based protocol that uses uniform shuffles only (i.e., a pile-shifting shuffle and a complete shuffle).
The transformation in \cite{SN25} produces a PSM protocol that uses uniform randomness if the initial card-based protocol uses uniform shuffles only.
The transformation in Lemma~\ref{lemma:reduction-communication} also preserves the uniformity of randomness: if the original protocol $\Sigma$ samples randomness uniformly, then so does the transformed protocol $\Sigma'$.
Therefore, the inequalities in (\ref{equation:relation-1}) and the first inequality in (\ref{equation:relation}) still hold in the uniform setting.
On the other hand, the transformation in Lemma~\ref{lemma:reduction-randomness} does not preserve uniformity since it modifies the probability mass function in Eq.~(\ref{equation:probability}).
Currently, it is not ensured that the second inequality in (\ref{equation:relation}) still holds in the uniform setting.

\bibliographystyle{abbrv}
\bibliography{arXiv_submission}

\end{document}